\newif\ifarxiv
\arxivtrue
\ifarxiv%
	\documentclass[11pt]{article}
	\usepackage[margin=3.75cm]{geometry}
	\usepackage{authblk}
	\usepackage{amsthm}
\else%
	\documentclass{llncs}
\fi

\usepackage{appendix-tools}
\ifarxiv
	\appendixfalse
\fi


\usepackage{xspace}
\usepackage[english]{babel}
\usepackage[utf8]{inputenc}
\usepackage{indentfirst}
\usepackage{amsmath}
\usepackage{amsfonts}
\usepackage{amssymb}
\usepackage{xcolor,graphicx}
\usepackage{tabularx}
\usepackage{mathtools}
\usepackage{thmtools}
\usepackage{thm-restate}
\usepackage{subfig}
\usepackage{colorpallette}
\usepackage{dsfont}
\usepackage{tikz}
\usepackage{paralist}
\usepackage{algorithm}
\usepackage[noend]{algpseudocode}
\usepackage[toc,page]{appendix}
\usepackage{mathtools}

\usepackage{graphdrawing}
\usepackage{pictikzgraph}
\usetikzlibrary{decorations.pathreplacing, decorations.markings, arrows}
\usepackage{myprobdescr}
\usepackage{mydefs}
\usepackage{problemdefs}
\usepackage[binary-units]{siunitx}
\usepackage{float}
\usepackage{booktabs}
\usepackage{textcomp}
\ifarxiv
\usepackage[sort]{natbib}
\renewcommand{\cite}[1]{\citep{#1}}
\fi
\usepackage{breakurl}
\usepackage{hyperref}
\hypersetup{colorlinks=true, linkcolor=darkblue, citecolor=darkblue, breaklinks=true}
\usepackage{breakcites}
\usepackage{nameref,thmtools}
\usepackage[sort&compress,nameinlink,noabbrev,capitalize]{cleveref}
\usepackage{todonotes}
\usepackage{paralist}
\usepackage{tablefootnote}
\usepackage{etoolbox}
\usepackage{hacks}

\makeatletter
\g@addto@macro{\UrlBreaks}{\UrlOrds}
\makeatother

\newcommand{\SetLabelname}{Set Label}
\newcommand{\SatLabelname}{Satisfy Label}
\newcommand{\SetLabel}{\hyperref[rrule:funnels:arc deletion set label]{\SetLabelname}\xspace}
\newcommand{\SatLabel}{\hyperref[rrule:funnels:arc deletion satisfy label]{\SatLabelname}\xspace}
\newcommand{\allglab}{\ForkF/\MergeF/-labeling}
\newcommand{\funlab}{funnel labeling\xspace}
\newcommand{\optlab}{optimal labeling\xspace}
\newcommand{\ArcDelApprox}{\text{\upshape\Algname{ArcDeletionSet}}\xspace}%


\newcommand{\pkL}{\textsc{$k$-Linkage}}
\newcommand{\pADDFl}{\textsc{Arc-Deletion Distance to a Funnel}}
\newcommand{\pADDF}{\textsc{ADDF}}

\tikzstyle{vecArrow} = [thick, decoration={markings,mark=at position
	1 with {\arrow[semithick]{open triangle 60}}},
double distance=1.4pt, shorten >= 5.5pt,
preaction = {decorate},
postaction = {draw,line width=1.4pt, white,shorten >= 4.5pt}]

\pagestyle{headings}

\ifarxiv
	\theoremstyle{plain}
	\newtheorem{theorem}{Theorem}
	\newtheorem{observation}{Observation}
	\newtheorem{definition}{Definition}
	\newtheorem{proposition}{Proposition}
	\newtheorem{lemma}{Lemma}
	
	\theoremstyle{definition}
	\newtheorem{case}{Case}
	\newtheorem{rrule}{Reduction Rule}
	\newtheorem{rrule*}{Reduction Rule}
	\newtheorem{brule}{Branching Rule}
\else
	\spnewtheorem{observation}{Observation}{\bfseries}{\itshape}
	\spnewtheorem{rrule}{Reduction Rule}{\bfseries}{\itshape}
	\spnewtheorem{rrule*}{Reduction Rule}{\bfseries}{\itshape}
	\spnewtheorem{brule}{Branching Rule}{\bfseries}{\itshape}
\fi

\crefname{rrule}{Reduction Rule}{Reduction Rules}
\crefname{case}{Case}{Cases}
\crefname{thm}{Theorem}{Theorems}
\crefname{align}{Equation}{Equations}
\crefname{ineq}{Inequality}{Inequalities}
\crefname{brule}{Branching Rule}{Branching Rules}
\crefname{observation}{Observation}{Observations}
 
\title{Efficient Algorithms for Measuring the Funnel-likeness of DAGs}

\ifarxiv
	\author[1]{Marcelo~Garlet~Millani\thanks{Partially supported by DFG project ``FPTinP'' NI 369/16-1.}}

	\affil[1]{Institut f\"ur Softwaretechnik und Theoretische Informatik,
 TU Berlin, Germany,
 
 \texttt{\{m.garletmillani. h.molter, rolf.niedermeier\}@tu-berlin.de}}

	\author[1]{Hendrik~Molter}

	\author[1]{Rolf~Niedermeier}

	\author[2]{Manuel~Sorge\thanks{Supported by the People Programme (Marie Curie Actions) of the European Union's Seventh Framework Programme (FP7/2007-2013) under REA grant agreement number {631163.11} and Israel Science Foundation (grant no. 551145/14).}}

	\affil[2]{Dept.\ Industrial Engineering and Management, Ben-Gurion University of the Negev, Beer Sheva, Israel,
	
\texttt{sorge@post.bgu.ac.il}}

\else
	\author{Marcelo~Garlet~Millani\inst{1}\thanks{Partially supported by DFG project ``FPTinP'' NI 369/16-1.} \and Hendrik~Molter\inst{1} \and Rolf~Niedermeier\inst{1}
  \and Manuel~Sorge\inst{1,2}\thanks{Supported by the People Programme (Marie Curie Actions) of the European Union's Seventh Framework Programme (FP7/2007-2013) under REA grant agreement number {631163.11} and
Israel Science Foundation (grant no. 551145/14).}}

	\institute{Institut f\"ur Softwaretechnik und Theoretische Informatik,
  TU Berlin, Germany \email{\{m.garletmillani, h.molter, rolf.niedermeier\}@tu-berlin.de} \and Department of Industrial Engineering and Management, Ben-Gurion University of the Negev, Beer Sheva, Israel,\\  
  \email{sorge@post.bgu.ac.il}}
\fi

\linepenalty=1000

\begin{document}

\maketitle 

\begin{abstract}
  Funnels are a new natural subclass of DAGs.
  Intuitively, a DAG is a funnel if every source-sink path can be uniquely identified by one of its arcs.
  Funnels are an analog to trees for directed graphs that is more restrictive than DAGs but more expressive than in-/out-trees.
  Computational problems such as finding vertex-disjoint paths or tracking the origin of memes remain NP-hard on DAGs while on funnels they become solvable in polynomial time.
  Our main focus is the algorithmic complexity of finding out how funnel-like a given DAG is.
  To this end, we study the NP-hard problem of computing the arc-deletion distance to a funnel of a given DAG.
  We develop efficient exact and approximation algorithms for the problem and test them on synthetic random graphs and real-world graphs.
\end{abstract}



\section{Introduction}
\label{sec:intro}
Directed acyclic graphs (DAGs) are finite directed graphs (digraphs)
without directed cycles and appear in many applications, including the representation of
precedence constraints in scheduling, data processing networks, causal
structures, or inference in proofs. From a more graph-theoretic point of view, DAGs can
be seen as a directed analog of trees; however, their combinatorial structure is much richer.
Thus a number of directed graph problems remain NP\nobreakdash-hard even when
restricted to DAGs. This motivates the study of subclasses of DAGs.
We study \emph{funnels} which are DAGs where each source-sink path has at least one private arc, that
is, no other source-sink path contains this arc. 
In independent work, Lehmann~\cite{Lehmann17} studied essentially the same graph class.

Funnels are both of combinatorial and graph-theoretic as well as of
practical interest: First, funnels are a natural compromise between
DAGs and trees as, similarly to in- or out-trees, the private-arc
property guarantees that the overall number of source-sink paths is
upper-bounded linearly by its number of arcs, yet multiple paths
connecting two vertices are possible.
%
Second, in \cref{sec:funnels} we show that funnels, in a divide \& conquer spirit, allow for a vertex
partition into a set of \emph{forking} vertices with indegree one and
possibly large outdegree and a set of \emph{merging} vertices with
outdegree one and possibly large indegree. This partitioning helps in designing our algorithms. Third, in terms of applications,
due to the simpler structure of funnels, problems
such as \textsc{DAG Partitioning}~\cite{Bevern2016,Leskovec2009}
or \textsc{Vertex Disjoint Paths}, (also known as
\pkL)~\cite{Gutin-Digraphs-2008,Fortune:1978} become tractable on funnels while they are NP-hard
on DAGs. Lehmann~\cite{Lehmann17} showed that a variation of the problem \textsc{Network
Inhibition}, which is NP-hard on DAGs, can be solved in polynomial time on funnels.
Altogether, we feel that funnels are one of so far few natural subclasses of DAGs.

The focus of this paper is on investigating the complexity of turning a given DAG into a funnel by a minimum number of arc deletions.
The motivation for this is twofold.
First, due to the noisy nature of real-world data, we expect that graphs from practice are not pure funnels, even though they may adhere to some form of funnel-like structure.
To test this hypothesis we need
efficient algorithms to determine funnel-likeness.
Second, as mentioned above, natural computational problems become tractable on
funnels (e.g., \pkL\ \cite{Mil17}). Thus it is promising to try and develop
fixed-parameter algorithms for such NP-hard DAG problems with respect to
distance parameters to funnels. This approach is known as exploiting
the ``distance from triviality''~\cite{Cai03,GHN04,Nie10}. A
natural way to measure the distance of a given DAG~$D$ to a
funnel is the \emph{arc-deletion distance to a funnel}, the minimum
number of arcs that need to be deleted from $D$ to obtain a funnel.
The problem of computing this distance parallels the well-studied NP-hard \FeedbackArcSet\ problem where the task is to turn a given digraph into a DAG by a minimum number of arc deletions. Even \FeedbackArcSet\ on tournaments is NP-hard and it received considerable interest over the last years~\cite{ailon2007hardness,bessy2011kernels,charbit2007minimum,kenyon2007rank}.

Formally, we study the \pADDFl~(\pADDF) problem, where, given a DAG~$D$, we want to find its arc-deletion distance~$d$ to a funnel.
%
We show that \pADDF\ is NP-hard and that it admits a linear-time factor-two approximation algorithm and a fixed-parameter algorithm with linear running time for constant~$d$.%
\footnote{There is also a simple
  $\Bo(5^d\cdot\Abs{V}\cdot\Abs{A})$-time algorithm for general digraphs \cite{Mil17}.}
  In experiments we demonstrate that our algorithms are useful in practice.

  \ifappendix
  Due to the lack of space, proofs of results marked with (\appsymb{}) are deferred to an Appendix.\fi

\section{Funnels: Definition and Properties}
\label{sec:funnels}
\appendixsection{sec:funnels}
\newcommand{\Fa}{\ensuremath{\mathcal{F}}\xspace}

In this section we formally define funnels.
We provide several equivalent characterizations, summarized in \cref{thm:funnel:characterization}, and analyze some basic properties of funnels. We use standard terminology from graph theory.

	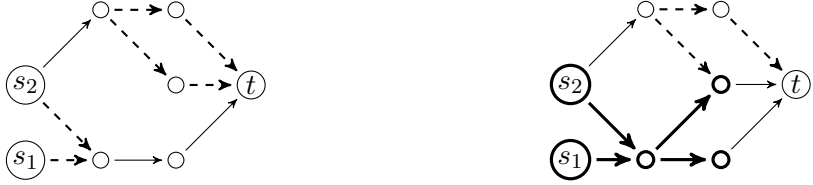
\begin{figure}[t]
	\centering
	\begin{Subfigure}{\Half}
		\begin{tikzpicture}[rotate=90]
\node[pictikz-node, draw=pictikz-black] (path4485-9) at (1.0, 0.0) {$t$};
\node[pictikz-node, draw=pictikz-black] (path4485-7) at (0.0, 1.0) {};
\node[pictikz-node, draw=pictikz-black] (path4485-35) at (1.0, 1.0) {};
\node[pictikz-node, draw=pictikz-black] (path4485-62) at (2.0, 1.0) {};
\node[pictikz-node, draw=pictikz-black] (path4485-5) at (2.0, 2.0) {};
\node[pictikz-node, draw=pictikz-black] (path4485-6) at (0.0, 2.0) {};
\node[pictikz-node, draw=pictikz-black] (path4485) at (0.0, 3.0) {$s_1$};
\node[pictikz-node, draw=pictikz-black] (path4485-3) at (1.0, 3.0) {$s_2$};
\draw[draw=pictikz-black, pictikz-dashed, pictikz-edgeto] (path4485) edge (path4485-6);
\draw[draw=pictikz-black, pictikz-edgeto] (path4485-6) edge (path4485-7);
\draw[draw=pictikz-black, pictikz-dashed, pictikz-edgeto] (path4485-3) edge (path4485-6);
\draw[draw=pictikz-black, pictikz-edgeto] (path4485-3) edge (path4485-5);
\draw[draw=pictikz-black, pictikz-dashed, pictikz-edgeto] (path4485-5) edge (path4485-35);
\draw[draw=pictikz-black, pictikz-dashed, pictikz-edgeto] (path4485-5) edge (path4485-62);
\draw[draw=pictikz-black, pictikz-dashed, pictikz-edgeto] (path4485-62) edge (path4485-9);
\draw[draw=pictikz-black, pictikz-dashed, pictikz-edgeto] (path4485-35) edge (path4485-9);
\draw[draw=pictikz-black, pictikz-edgeto] (path4485-7) edge (path4485-9);
		\end{tikzpicture}
	\end{Subfigure}
	\qquad
	\begin{Subfigure}{\Half}
		\begin{tikzpicture}[rotate=90]
\node[pictikz-node, draw=pictikz-black] (path4485-9) at (1.0, 0.0) {$t$};
\node[pictikz-node, draw=pictikz-black, pictikz-thick] (path4485-7) at (0.0, 1.0) {};
\node[pictikz-node, draw=pictikz-black, pictikz-thick] (path4485-35) at (1.0, 1.0) {};
\node[pictikz-node, draw=pictikz-black] (path4485-62) at (2.0, 1.0) {};
\node[pictikz-node, draw=pictikz-black] (path4485-5) at (2.0, 2.0) {};
\node[pictikz-node, draw=pictikz-black, pictikz-thick] (path4485-6) at (0.0, 2.0) {};
\node[pictikz-node, draw=pictikz-black, pictikz-thick] (path4485) at (0.0, 3.0) {$s_1$};
\node[pictikz-node, draw=pictikz-black, pictikz-thick] (path4485-3) at (1.0, 3.0) {$s_2$};
\draw[draw=pictikz-black, pictikz-thick, pictikz-edgeto] (path4485) edge (path4485-6);
\draw[draw=pictikz-black, pictikz-thick, pictikz-edgeto] (path4485-6) edge (path4485-7);
\draw[draw=pictikz-black, pictikz-thick, pictikz-edgeto] (path4485-3) edge (path4485-6);
\draw[draw=pictikz-black, pictikz-edgeto] (path4485-3) edge (path4485-5);
\draw[draw=pictikz-black, pictikz-dashed, pictikz-edgeto] (path4485-5) edge (path4485-35);
\draw[draw=pictikz-black, pictikz-dashed, pictikz-edgeto] (path4485-5) edge (path4485-62);
\draw[draw=pictikz-black, pictikz-dashed, pictikz-edgeto] (path4485-62) edge (path4485-9);
\draw[draw=pictikz-black, pictikz-edgeto] (path4485-35) edge (path4485-9);
\draw[draw=pictikz-black, pictikz-edgeto] (path4485-7) edge (path4485-9);
\draw[draw=pictikz-black, pictikz-thick, pictikz-edgeto] (path4485-6) edge (path4485-35);
		\end{tikzpicture}
	\end{Subfigure}
	\caption{Example of a funnel (left) and a DAG which is not a funnel (right).
	Private arcs are marked as dashed lines.
	The DAG on the right is not a funnel because all arcs in an \Patht{s_1}{t} are shared.
	Removing one arc from it turns it into a funnel.
	A forbidden subgraph for funnels is marked in bold.}
	\label{fig:funnel:example}
	\end{figure}
	To define funnels as a proper subclass of DAGs, we limit the number of paths that may exist between two vertices (which can be exponential in DAGs but is one in trees).
	Requiring every path between two vertices to be unique would possibly be too restrictive, and in the case of a single source such DAGs would simply be so-called out-trees.
	Instead, we require each path going from a source to a sink to be uniquely identified by one of its \emph{private} arcs.
	We say that an arc is private if there is only one source-sink path which goes through that arc.
	An example of a funnel can be seen in \Cref{fig:funnel:example}.
	\begin{definition}[Funnel]
		\label{def:funnel}
		A DAG~$D$ is a \emph{funnel} if every source-sink path has at least one private arc.
		\label{def:funnel:funnel = private edge}
	\end{definition}
	From this definition it is clear that the number of source-sink paths in a funnel is linearly upper-bounded in its number of arcs.
	
	Different characterizations of funnels reveal certain interesting properties which these digraphs have, and are used in subsequent proofs and algorithms.
	We summarize these characterizations in the theorem below.
	In the following, $\OutC{v}$ denotes the set of vertices that can be reached from $v$ in a given DAG,~$\OutN{v}$ denotes the set of neighbors of~$v$ and~$\Out{v}$ denotes $v$'s outdegree;
	$\InC{v}, \InN{v}$ and $\In{v}$ are defined analogously.
	
	\ifarxiv\pagebreak\fi
	\begin{theorem}
		\label{thm:funnel:characterization}
 		Let~$D$ be a DAG. The following statements are equivalent:
		\begin{compactenum}
			\item $D$ is a funnel.
			\label{chr:funnel}
			
			\item For each vertex~$v \in V: \In{v} > 1 \Rightarrow \forall u \in \OutC{v}:  \Out{u} \leq 1$.
			\label{chr:degree}
			
			\item No subgraph of~$D$ is contained in~$\Fa = \ManyZ{D}{\infty}$, where
			\begin{compactitem}
			\item $D_k = (V_k, A_k)$,
			\item $V_k = \{u_1, u_2, v_0, w_1,w_2\} \cup \Many{v}{k}$, and 
			\item $A_k = \{(u_1,v_0), (u_2, v_0), (v_k, w_1), (v_k, w_2)\} \cup \{(v_i, v_{i+1})\}_{i=1}^{k-1}$.
			\end{compactitem} 
			\label{chr:forbidden subgraph}
			\item $D$ does not contain~$D_0$ or~$D_1$ (defined above) as a topological minor.\footnote{A graph $H$ is called a \emph{topological minor} of a graph $G$ if a subgraph of $G$ can be obtained from $H$ by subdividing edges (that is, replacing arcs by directed paths).
                          }
			\label{chr:forbidden minor}
		\end{compactenum}
	\end{theorem}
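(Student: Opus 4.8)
The plan is to route all four statements through a single reachability condition and to prove each equivalence against it. Write $(\star)$ for the statement that there exist vertices $v,u\in V$ with $\In{v}\ge 2$, $u\in\OutC{v}$, and $\Out{u}\ge 2$ (the case $u=v$ allowed). The degree condition~(2) is by definition precisely $\lnot(\star)$, so it suffices to show that the negations of the funnel condition~(1), the forbidden-subgraph condition~(3), and the topological-minor condition~(4) are each equivalent to $(\star)$. Throughout I would repeatedly use that in a DAG reachability is witnessed by a simple directed path and that no directed path revisits a vertex, which is what rules out the degenerate coincidences below.

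For the two forbidden-structure conditions the translation is routine. Given $(\star)$, take a shortest (hence simple) directed path $v=p_0\to\cdots\to p_k=u$, two distinct in-neighbours $a_1,a_2$ of $v$, and two distinct out-neighbours $b_1,b_2$ of $u$; acyclicity forces all of these vertices to be pairwise distinct, since any coincidence (an in-neighbour of $v$ lying on the path, or equal to an out-neighbour of $u$, etc.) would close a directed cycle through the path. The resulting subgraph is a copy of $D_k$, giving $\lnot(3)$; moreover $D_k$ with $k\ge 1$ is a subdivision of $D_1$ and $D_0$ is its own subdivision, so the same vertices exhibit $D_0$ or $D_1$ as a topological minor, giving $\lnot(4)$. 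Conversely, any $D_k\in\Fa$ already contains the merge $v_0$ (with $\In{v_0}\ge 2$) reaching the fork $v_k$ (with $\Out{v_k}\ge 2$), and any $D_0$/$D_1$-topological minor supplies two internally vertex-disjoint paths into a branch vertex, whose differing last arcs force in-degree at least two, together with two arcs out of a branch vertex reached from it; so both $\lnot(3)$ and $\lnot(4)$ imply $(\star)$.

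The real content is the equivalence of $(\star)$ with $\lnot(1)$. For $(\star)\Rightarrow\lnot(1)$ I would extend the $v$--$u$ path to a source-sink path $Q$ and fix an in-neighbour $p'$ of $v$ other than its predecessor on $Q$ and an out-neighbour $q'$ of $u$ other than its successor on $Q$. Because $v$ precedes $u$ on $Q$, every arc of $Q$ lies on the prefix ending at $u$ or on the suffix starting at $v$. An arc on the prefix is also carried by the source-sink path that agrees with $Q$ up to $u$ and then leaves through $q'$; an arc on the suffix is carried by the path that enters $v$ through $p'$ and then follows $Q$ onward. Acyclicity makes these rerouted walks simple source-sink paths, and each differs from $Q$ (at $u$, respectively before $v$), so no arc of $Q$ is private and $D$ is not a funnel.

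The hard part will be $\lnot(1)\Rightarrow(\star)$: from a source-sink path $Q=x_0\cdots x_m$ all of whose arcs are shared I must locate a merge that reaches a fork. The first arc lies on a second source-sink path that, starting at the same source, branches off $Q$ at a vertex of out-degree at least two; let $x_j$ be such a branch vertex of \emph{largest} index (so $j\ge 1$). Applying sharedness to the arc $(x_j,x_{j+1})$ gives a second path $Q^\ast\ne Q$ through it; by maximality of $j$ this $Q^\ast$ cannot branch away from $Q$ anywhere after $x_j$, hence coincides with $Q$ on $x_j\cdots x_m$, and being different from $Q$ it must reach $x_j$ along a different route, exposing an index $i\le j$ with $\In{x_i}\ge 2$. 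Then $x_i$ reaches $x_j$ along $Q$ with $\In{x_i}\ge 2$ and $\Out{x_j}\ge 2$, which is exactly $(\star)$. The delicate points, and the main obstacle, are the maximality argument that forces $Q^\ast$ to rejoin $Q$ (thereby revealing a merge rather than yielding a merge located \emph{after} the fork, which would be useless) and the bookkeeping that keeps $i\le j$, so that the witnessed merge genuinely reaches the witnessed fork.
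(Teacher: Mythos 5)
Your proof is correct. Most of its architecture coincides with the paper's: both arguments pivot on the degree condition (your $(\star)$ is exactly its negation), the translations to the forbidden-subgraph and topological-minor conditions are the same routine checks, and your argument that $(\star)$ kills every private arc of a path through the merge--fork pair (rerouting the prefix through a second out-neighbour of $u$ and the suffix through a second in-neighbour of $v$) is precisely the paper's contrapositive proof that funnels satisfy the degree condition. Where you genuinely diverge is the remaining direction. The paper proves ``degree condition $\Rightarrow$ funnel'' \emph{directly}: in a DAG satisfying (2) it takes a source-sink path, locates the first vertex $v$ with $\Out{v}>1$ reachable from the source and the first later vertex $w$ with $\In{w}>1$, argues that the segments before $v$ and after $w$ are uniquely determined, and exhibits an explicit private arc $(x,w)$. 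You instead prove the contrapositive: from a source-sink path $Q$ with no private arc you take the \emph{last} fork $x_j$ on $Q$, observe that a second source-sink path through $(x_j,x_{j+1})$ is forced by maximality of $j$ to coincide with $Q$ from $x_j$ onward, and walk backwards to expose a merge $x_i$ with $i\le j$, which is $(\star)$. Both are sound. The paper's version buys an explicit description of where the private arcs sit, which is what its linear-time recognition and labeling machinery later exploits; your version is more economical, handling everything with one path, two witnesses, and a single extremal choice, and it avoids the slightly delicate uniqueness bookkeeping for the segments before $v$ and after $w$. The two points you flagged as delicate --- that $x_0$ being a source pins down the backward walk so the prefixes must split at some $x_i$ with $\In{x_i}\ge 2$, and that maximality of $j$ forces $Q^\ast$ to rejoin $Q$ rather than revealing a useless fork-before-merge --- are exactly the right ones, and your handling of them is correct.
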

	\appendixproof{thm:funnel:characterization}{
	\begin{figure}
	\centering
		\begin{tikzpicture}[rotate=90, xscale=1.5, yscale=2.5]
		\input{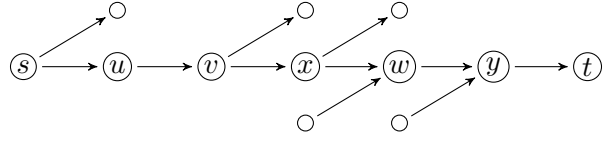}
		\end{tikzpicture}
	\caption{Illustration of the vertices used in the proof of statment (\ref{chr:degree}) \Cref{thm:funnel:characterization}.
	We argue that the arc~$(x,w)$ is private.}
	\label{fig:funnel:prop degree}
	\end{figure}
	\begin{proof}We first prove that (\ref{chr:funnel}) $\Leftrightarrow$ (\ref{chr:degree}),
		that is, we show that a DAG $D = (V,A)$ is a funnel if and only if 
		\begin{align}
			&\label[ineq]{item:neighbor degree} \forall v \in V: \In{v} > 1 \Rightarrow \forall u \in \OutC{v}:  \Out{u} \leq 1.
		\end{align}
		The idea is to identify the private arcs and to argue that each path must contain at least one of those arcs.
		Refer to \cref{fig:funnel:prop degree} while reading the proof.
		We start by showing that a DAG satisfying \eqref{item:neighbor degree} is a funnel.

		Let $D = (V,A)$ be a DAG which satisfies \eqref{item:neighbor degree}, let $s \in V$ be a source and $t \in V$ be a sink such that some \Patht{s}{t} exists, and let~$v$ be the first vertex in $\OutC{s}$ with $\Out{v} > 1$.
		If no such vertex $v$ exists, then there is only one \Patht{s}{t} in~$D$ and all outgoing arcs from~$s$ are private.
		Otherwise, due to \eqref{item:neighbor degree} we know that~$\forall u \in \InC{v} \setminus \{s\} : \In{u} = 1$.
		This means that there is exactly one \Patht{s}{v}.
		Let~$P$ be some \Patht{s}{t} that goes through~$v$ and let~$w$ be the first vertex in this path with $\In{w} > 1$.
		If no such~$w$ exists, then there is only one \Patht{v}{t} and all arcs after $v$ are private, as required.
		Otherwise, we consider a vertex~$x$ such that the arc~$(x,w)$ is in~$P$.
		We know~$\In{x} = 1$, which implies that there is only one \Patht{v}{x}.
		Since the \Patht{v}{x} as well as the \Patht{s}{v} are unique and $\forall y \in \OutC{w} : \Out{y} = 1$, the arc~$(x,w)$ is private for the \Patht{s}{t} that contains it.
		Thus,~$D$ is a funnel.

		We next show that every funnel satisfies \eqref{item:neighbor degree}.
		We do this by contraposition, showing that every arc of some \Patht{s}{t} is present in at least one other source-sink path if \eqref{item:neighbor degree} does not hold.

		Let~$D = (V,A)$ be DAG where \eqref{item:neighbor degree} is not true.
		This means that there is some vertex~$u \in V$ with~$\In{u} > 1$ and that there is some other vertex $w \in \OutC{u}$ with $\Out{w}>1$.
		Let~$w$ be the first such vertex.
		Then there are at least $\In{u}$~many paths from some source to $u$, and $\Out{w}$~many from $w$ to some sink.
		Since there is at least one \Patht{u}{w} (possibly without arcs), this implies that every arc in the induced subgraph~$\InCc{u}$ is shared by $\Out{w}$~many paths, every arc in~$\OutCc{w}$ is shared by $\In{u}$~many paths, and all arcs in a \Patht{u}{w} are shared by~$\In{u} \cdot \Out{w}$~many paths.
		Hence, all arcs in a source-sink path which goes through~$u$ and~$w$ are shared, implying that $D$ is not a funnel.

		Next, we prove that $(\ref{chr:degree}) \Leftrightarrow (\ref{chr:forbidden subgraph})$ by contraposition.
		That is, we show that~$\exists C \subseteq D : C \in \Fa$ if and only if~$D$ does not satisfy \eqref{item:neighbor degree}.

		Let $C \subseteq D$ be a subgraph of~$D$ such that $C \in \Fa$.
		By definition of~$C$ it contains some vertex~$v_0$ with~$\In[C]{v_0} = 2$ and another vertex~$v_k \in \OutC[C]{v_0}$ with~$\Out[C]{v_k} = 2$.
		This implies~$\In[D]{v_0} > 1$ and~$\Out[D]{v_k} > 1$, violating~\eqref{item:neighbor degree}.

		Now assume~$D$ is does not satisfy \eqref{item:neighbor degree}.
		That is, there is some vertex~$v$ with $\In[D]{v} > 1$ and another vertex~$u \in \OutC[D]{v}$ with~$\Out[D]{u} > 1$.
		Let~$u_1, u_2 \in \InN[D]{v}$ and~$w_1, w_2 \in \OutN[D]{u}$ be four distinct vertices.
		Let~$v,v_1,v_2,$ $\dots ,v_{k-1},u$ be a \Patht{v}{u}.
		We set~$v_0 \Set v$ and~$v_k \Set u$, obtaining the forbidden subgraph~$D_k$ if~$u \neq v$, and~$D_0$ otherwise.
		Hence,~$D$ contains a subgraph from~$\Fa$.

	Finally, we show that $(\ref{chr:forbidden subgraph}) \Leftrightarrow (\ref{chr:forbidden minor})$.
	It is enough to show that any~$D_i \in \Fa$ can be obtained by subdividing~$D_0$ or~$D_1$ multiple times, and that any subdivision of~$D_0$ and~$D_1$ contains some digraph of~$\Fa$ as a subgraph.

	We first show that we can generate \Fa by subdividing~$D_0$ and~$D_1$.
	Let~$D_i \in \Fa$.
	If~$i \leq 1$, then~$D_i$ obviously contains~itself as a topological minor.
	If~$i > 1$, then by subdividing the arc~$(v_0,v_1)$ from~$D_1$ a total of~$i-1$ times, we obtain~$D_i$.
	Hence all digraphs in \Fa can be generated by~$D_0$ and~$D_1$ through subdivisions.

	Now we show that any subdivision of~$D_0$ and~$D_1$ contains some digraph from~\Fa.
	Since subdividing arcs does not change the degrees of the affected vertices, the degree of $v_0$ remains the same.
	Hence, any subdivision of $D_0$ contains $D_0 \in \Fa$ as a subgraph.

	For any subdivision~$D'_1$ of~$D_1$ we know that~$\In[D'_1]{v_0} = 2$, $\Out[D'_1]{v_1} = 2$ and~$v_1 \in \OutC[D'_1]{v_0}$.
	If we subdivide incoming arcs of $v_0$ or outgoing arcs of $v_1$, the resulting DAG will contain $D_1$ as a subgraph.
	If we subdivide $k$ times the arc $(v_0, v_1)$, we obtain $D_{k+1} \in \Fa$ as a subgraph.
	
	We showed that $(\ref{chr:funnel}) \Leftrightarrow (\ref{chr:degree}) \Leftrightarrow (\ref{chr:forbidden subgraph}) \Leftrightarrow (\ref{chr:forbidden minor})$, thus proving that all four statements are equivalent.
	\end{proof}}

	\Cref{def:funnel} does not give us a very efficient way of checking whether a given DAG is a funnel or not.
	A simple algorithm which counts how many paths go through each arc would take~$\Bo(\Abs{A}^2)$ time.
	Using the characterization in \cref{thm:funnel:characterization}(\ref{chr:degree}) we can follow some topological ordering of the vertices of a DAG and check in linear time whether it is a funnel.
	
	\looseness=-1 The \emph{degree characterization} in \cref{thm:funnel:characterization}(\ref{chr:degree}) provides some additional insight about the structure of a funnel.
	We can see that a funnel can be partitioned into two induced subgraphs: One is an out-forest and the other is an in-forest.
	Note that this partition is not necessarily unique.
	For use below, a \emph{\allglab} for given a DAG with vertex set~$V$ is a function~$L : V \rightarrow \{\Fork, \Merge\}$ which gives a \emph{label} to each vertex. An \allglab\ for a funnel is called \emph{\funlab} if the 
	vertices in the out-forest of the funnel are assigned the label \Fork and vertices in the in-forest are assigned the label \Merge.
	 The following holds.
	\begin{observation}\label{obs:funnel:labels}
	Let~$D = (V, A)$ be a funnel and~$L$ be a \funlab for~$D$.
	Then there is no $(v,u) \in A$ with $L(v) = \Merge$ and $L(u) = \Fork$.
	\end{observation}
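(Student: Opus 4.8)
The plan is to argue by contradiction. Suppose there is an arc $(v,u) \in A$ with $L(v) = \Merge$ and $L(u) = \Fork$. I first read off the local degree consequences of the labels. Since $L(u) = \Fork$, the vertex $u$ lies in the out-forest, so $\In{u} \le 1$; as $(v,u)$ already enters $u$, it is the unique incoming arc of $u$ and $v$ is $u$'s only in-neighbour. Dually, $L(v) = \Merge$ places $v$ in the in-forest, so $\Out{v} \le 1$, and $(v,u)$ is then the unique outgoing arc of $v$, making $u$ the only out-neighbour of $v$. These two facts are not yet contradictory, so the substance of the proof is to connect this single arc to the global decomposition.

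The key step is to show that the set of $\Merge$-labelled vertices is closed under reachability: if $L(v) = \Merge$, then $L(w) = \Merge$ for every $w \in \OutC{v}$. I would derive this from the degree characterization in \cref{thm:funnel:characterization}(\ref{chr:degree}). A vertex belongs to the in-forest precisely because it is on the merging side of the funnel, i.e.\ it has an ancestor $z$ (possibly itself) with $\In{z} > 1$; by \cref{thm:funnel:characterization}(\ref{chr:degree}), every vertex of $\OutC{z}$ then has outdegree at most one, which is exactly what lets these vertices be collected into an in-forest. Applying this to $v$ yields such a $z$, and since $(v,u) \in A$ we also have $u \in \OutC{z}$. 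Hence $u$ sits on the merging side as well and must be labelled $\Merge$, contradicting $L(u) = \Fork$ and completing the argument.

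I expect the main obstacle to be justifying the reachability-closure used above, namely that every funnel labeling must assign $\Merge$ to all vertices in $\OutC{z}$ whenever $\In{z} > 1$ (and dually $\Fork$ to the upstream tree part). This is exactly the place where the non-uniqueness of the out-forest/in-forest partition has to be controlled: one must rule out that some valid labeling places $u$ into the out-forest while leaving its forced predecessor $v$ in the in-forest, which would break the indegree-at-most-one structure of the out-forest along the ancestor chain determined by $z$. Once this closure is in place, the contradiction above is immediate.
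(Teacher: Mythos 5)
There is a genuine gap here, and you have in fact pointed at it yourself without closing it. Your argument reduces the observation to the claim that the set of \Merge-labelled vertices is closed under reachability, and you justify that closure by asserting that a vertex lies in the in-forest \emph{precisely because} it has an ancestor $z$ (possibly itself) with $\In{z} > 1$. That characterization is only the ``forced'' direction: by \cref{thm:funnel:characterization}(\ref{chr:degree}) such vertices must indeed end up on the merging side, but the converse fails. The partition is non-unique exactly because a vertex $v$ with $\In{v} \le 1$ and $\Out{v} \le 1$, no high-indegree ancestor and no high-outdegree descendant may legitimately be labelled \Merge; for such a $v$ your argument produces no witness $z$, and the closure claim is left unsupported. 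Note also that the closure property is logically equivalent to the observation itself (closure under a single arc-step iterates to closure under reachability), so the reduction makes no progress: all of the content lies in establishing closure, which is precisely the step you flag as ``the main obstacle'' and do not carry out.

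The underlying difficulty is that the statement cannot follow from the degree characterization alone under the weakest reading of the definition. If a \funlab\ only required that the \Fork-vertices induce an out-forest and the \Merge-vertices induce an in-forest, the observation would be false: in the funnel consisting of the single arc $(a,b)$, labelling $a$ with \Merge and $b$ with \Fork splits the vertex set into two single-vertex (hence trivially valid) forests, yet produces a \Merge-to-\Fork arc. Any correct proof must therefore appeal to the additional structure built into the intended partition, namely that the in-forest is taken out-closed (it contains $\OutC{z}$ for every $z$ with $\In{z} > 1$, and nothing whose out-neighbourhood escapes it), at which point the observation is immediate from the construction rather than from a reachability argument over an arbitrary labelling. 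This is how the paper treats it: the observation is stated without proof, and the proof of \cref{prop:funnel:number of arcs} simply invokes ``the construction'' to rule out arcs from the \Merge part to the \Fork part. To repair your write-up, you should first fix the definition of the partition you are working with (the out-closure of the high-indegree vertices, extended only in a closure-preserving way) and then observe that closure under out-neighbours is built in; the contradiction you set up at the start then goes through.
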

	
	\noindent With a simple counting argument it is also possible to give an upper bound on the number of arcs in a funnel.
	This bound is sharp.

	\begin{observation}
	\label{prop:funnel:number of arcs}
	Let~$D = (V,A)$ be a funnel. Then~$\Abs{A} \leq \Abs{V}^2/4 + \Abs{V} - 2$.
	\end{observation}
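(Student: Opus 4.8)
The plan is to exploit the degree characterization in \cref{thm:funnel:characterization}(\ref{chr:degree}) together with the out-forest/in-forest partition mentioned just above the statement. First I would fix a \funlab{} $L$ of $D$ and split the vertex set into the forking vertices $F = L^{-1}(\Fork)$ and the merging vertices $M = L^{-1}(\Merge)$, so that $\Abs{F} + \Abs{M} = \Abs{V}$. By the definition of the labeling, the induced subgraph on $F$ is an out-forest and the induced subgraph on $M$ is an in-forest. The key structural input is \cref{obs:funnel:labels}: there is no arc from a merging vertex to a forking vertex. Hence every arc whose head lies in $F$ must also have its tail in $F$ (it cannot come from $M$), and symmetrically every arc whose tail lies in $M$ must have its head in $M$.

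Next I would classify every arc by the labels of its two endpoints, giving exactly four classes: arcs inside $F$, arcs inside $M$, arcs from $F$ to $M$, and arcs from $M$ to $F$. The last class is empty by the previous paragraph. Since the subgraph induced on $F$ is acyclic and an out-forest, it has at most $\Abs{F} - 1$ arcs; symmetrically the subgraph induced on $M$ has at most $\Abs{M} - 1$ arcs. The arcs from $F$ to $M$ are trivially at most $\Abs{F}\cdot\Abs{M}$ in number. Summing these three bounds yields $\Abs{A} \le \Abs{F}\cdot\Abs{M} + \Abs{F} + \Abs{M} - 2 = \Abs{F}\cdot\Abs{M} + \Abs{V} - 2$.

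Finally, since $\Abs{F} + \Abs{M} = \Abs{V}$, the inequality $\Abs{F}\cdot\Abs{M} \le (\Abs{F}+\Abs{M})^2/4 = \Abs{V}^2/4$ gives the claimed $\Abs{A} \le \Abs{V}^2/4 + \Abs{V} - 2$. I do not expect a genuine obstacle here; the only point needing care is to extract the two $-1$ terms from acyclicity, since a pure degree count (each forking vertex has indegree at most one, each merging vertex outdegree at most one) would only give $\Abs{F}\cdot\Abs{M} + \Abs{V}$, which is slightly too weak to match the additive constant. For sharpness I would exhibit the funnel on $\Abs{F} = \Abs{M} = \Abs{V}/2$ consisting of all $\Abs{F}\cdot\Abs{M}$ arcs from $F$ to $M$, a spanning out-tree on $F$, and a spanning in-tree on $M$; a quick check via \cref{thm:funnel:characterization}(\ref{chr:degree}) confirms it is a funnel (only merging vertices have indegree above one, and from them only merging vertices, all of outdegree at most one, are reachable), and it attains $\Abs{V}^2/4 + \Abs{V} - 2$ arcs.
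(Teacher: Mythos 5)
Your proposal is correct and follows essentially the same route as the paper's own proof: partition the vertices by a funnel labeling into the out-forest and in-forest parts, bound the three arc classes by $\Abs{F}-1$, $\Abs{M}-1$, and $\Abs{F}\cdot\Abs{M}$ (with no arcs from merging to forking vertices), and maximize via $\Abs{F}=\Abs{M}=\Abs{V}/2$. The sharpness example you add is a nice touch but matches what the paper asserts without proof.
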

	\appendixproof{prop:funnel:number of arcs}{
	\begin{proof}
	Let~$L$ be a \funlab\ for~$D$.
	Let~$V = X \uplus Y$ where~$\forall v \in X: L(v) = \Fork$ and $\forall v \in Y: L(v) = \Merge$.
	Clearly, the vertices in~$X$ form an out-forest, while those in~$Y$ form an in-forest.
	This gives us at most~$\Abs{X}-1$ arcs between vertices in~$X$, and at most~$\Abs{Y} - 1$ arcs between vertices in~$Y$.
	Furthermore, there are at most~$\Abs{X}\cdot\Abs{Y}$ arcs from vertices in~$X$ to vertices in~$Y$ and we know from the construction that there are no arcs from~$Y$ to~$X$.
	Hence,~$\Abs{A} \leq \Abs{X}\cdot\Abs{Y} + \Abs{X} + \Abs{Y} - 2$.
	This value is maximized when~$\Abs{X} = \Abs{Y} = \Abs{V}/2$, which gives us the bound $\Abs{A} \leq \Abs{V}^2/4 + \Abs{V} - 2$.
	\end{proof}}
		Considering that a DAG has at most $\Abs{V}(\Abs{V} - 1)/2$ arcs, \cref{prop:funnel:number of arcs} implies that a funnel can have roughly half as many arcs as a DAG.
		This means that funnels are not necessarily sparse (unlike forests).


While the degree characterization is useful for algorithms, the characterizations by forbidden subgraphs and minors (\cref{thm:funnel:characterization}(\ref{chr:forbidden subgraph} and \ref{chr:forbidden minor})) help us to understand the local structure of a funnel and of graphs that are not funnels.
These characterizations also imply that being a funnel is a \emph{hereditary} graph property, that is, deleting vertices does not destroy the funnel property.

\section{Computing the Arc-Deletion Distance to a Funnel}
\label{sec:algs}
\appendixsection{sec:algs}

\newcommand{\Apf}{b}
\newcommand{\Opf}{a}

In this section we show \pADDF\ is NP-hard, and present a linear-time factor-2 approximation algorithm and an exact fixed-parameter algorithm.
Our algorithms also compute the set of arcs to be deleted.
We remark that the corresponding vertex-deletion distance minimization problem is also NP-hard and that it can be solved in $\Bo(6^d\Abs{V} \cdot \Abs{A})$ time, where $d$ is the number of vertices to delete \cite{Mil17}.
The following result can be shown by a reduction from \textsc{3-SAT}.
\begin{theorem}
\label{thm:funnels:arc deletion:np-hard}
\pADDF\ is NP-hard.
\end{theorem}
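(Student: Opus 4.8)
The plan is to reduce from \textsc{3-SAT}. The decision version of \pADDF{} (given~$D$ and~$k\in\mathbb{N}$, is the arc-deletion distance to a funnel at most~$k$?) is clearly in NP, since by \cref{thm:funnel:characterization}(\ref{chr:degree}) one can verify in linear time that $D$ minus a guessed arc set is a funnel. As a preliminary step I would reformulate the problem through the degree characterization: combining \cref{thm:funnel:characterization}(\ref{chr:degree}) with \cref{obs:funnel:labels}, deleting an arc set~$S$ turns~$D$ into a funnel if and only if there is a labeling $L\colon V\to\{\Fork,\Merge\}$ such that in $D-S$ every \Fork-vertex has indegree at most~$1$, every \Merge-vertex has outdegree at most~$1$, and no arc leads from a \Merge-vertex to a \Fork-vertex. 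The forward direction is the \funlab{} of \cref{obs:funnel:labels}; the converse holds because a \Merge-vertex then reaches only \Merge-vertices, so the degree condition is satisfied. This turns the task into a combined labeling-and-deletion problem, which is the structure I want to exploit.

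\emph{Variable gadget.} For each variable~$x_i$ I would introduce a vertex~$v_i$ forced to behave like the forbidden graph~$D_0$: attach two private pendant sources pointing into~$v_i$ and two private pendant sinks leaving~$v_i$. Then~$v_i$ has $\In{v_i}\ge 2$ and $\Out{v_i}\ge 2$ while reaching itself, so \cref{thm:funnel:characterization}(\ref{chr:degree}) is violated, and any funnel solution must at~$v_i$ either bring~$\In{v_i}$ down to~$1$ or~$\Out{v_i}$ down to~$1$. I would let the incoming arcs of~$v_i$ carry the positive occurrences of~$x_i$ (plus one dummy) and its outgoing arcs carry the negative occurrences (plus one dummy), so that reducing the indegree deletes exactly the positive-occurrence arcs and commits~$v_i$ to~\Fork, whereas reducing the outdegree deletes the negative-occurrence arcs and commits~$v_i$ to~\Merge. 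Reading $v_i=\Fork$ as $x_i=\text{true}$ yields a binary assignment, and after padding so that each variable has equally many positive and negative occurrences, either choice costs the same fixed amount~$d_i$.

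\emph{Clause gadget and budget.} For each clause~$C_j$ I would build a single obstruction path from a forced \Merge-vertex~$\alpha_j$ (indegree two via pendants) to a forced \Fork-vertex~$\omega_j$ (outdegree two via pendants) that runs in series through the three literal arcs of~$C_j$, each literal arc being the occurrence arc sitting at the corresponding variable vertex. Since~$\alpha_j$ is a merge reaching the fork~$\omega_j$, the obstruction is repaired only if some arc of the path is deleted, and because the path is a series composition, deleting any single literal arc suffices. Crucially, a literal arc is deleted \emph{for free}, as part of the forced~$d_i$ deletions of its variable gadget, exactly when the corresponding literal is set to true. Setting the budget to $k=\sum_i d_i$ then leaves no slack beyond the variable gadgets, so every clause obstruction must already be repaired by variable deletions.

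With this in hand the two directions are routine. Given a satisfying assignment, label each~$v_i$ by its truth value and delete the induced occurrence arcs; each clause has a true literal, hence a deleted arc on its path, so all obstruction paths are cut and~$D$ becomes a funnel using exactly~$k$ deletions. Conversely, a solution~$S$ with $\lvert S\rvert\le k$ must spend~$d_i$ at each (arc-disjoint) variable gadget, hence uses all of~$k$ there and induces a consistent assignment; any clause whose obstruction path survived would force a further deletion, contradicting $\lvert S\rvert\le k$, so every clause is satisfied. The main obstacle I expect is precisely the wiring between variable and clause gadgets: a clause path may be cut by a variable deletion \emph{only in the intended polarity}, which forces each occurrence arc to touch its variable vertex by a single arc rather than being traversed internally (a vertex traversed internally exposes both an in- and an out-arc, one of which is deleted under \emph{either} label, cutting the path regardless of polarity and thereby collapsing the reduction). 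Resolving this cleanly may require splitting~$v_i$ into per-occurrence copies tied together by a consistency subgadget. Alongside this, one must keep the whole construction acyclic, ensure no \emph{unintended} member of~$\Fa$ is created, and make the counting exactly tight so that no clause repair can be hidden inside a variable's budget.
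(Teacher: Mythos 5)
You take the same route as the paper at the top level---a reduction from \textsc{3-SAT} driven by the degree characterization of \cref{thm:funnel:characterization}, with obstruction-forcing variable centers, a tight budget, and clause obstructions that must be repaired ``for free'' by the variable deletions---but your clause gadget does not work as described, and you flag this yourself without resolving it. The series-composed obstruction path from $\alpha_j$ to $\omega_j$ must physically traverse each variable vertex $v_i$ it visits: it enters $v_i$ on an in-arc and must leave on an out-arc to reach the next literal. Whatever that out-arc is, it counts toward the outdegree of $v_i$, so if $v_i$ is labeled \Merge{} (i.e., $x_i$ is set to false) the forced outdegree reduction may delete it and cut the clause path even though the positive literal $x_i$ is false; symmetrically for the other polarity. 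This breaks the backward direction (a clause path can be severed without any of its literals being true), and your proposed remedy---``splitting $v_i$ into per-occurrence copies tied together by a consistency subgadget''---is exactly the missing piece: without specifying that subgadget and redoing the tight counting, the reduction is not established. The budget accounting also silently assumes the clause gadgets cost nothing, which requires that deleting the pendant arcs at $\alpha_j$ or $\omega_j$ is never a repair the adversarial solution can afford; that is plausible under tightness but unargued.

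The paper's construction sidesteps the traversal problem entirely by never routing a path \emph{through} a variable center. Each clause has a center $c_0$ with indegree four (pendants) and outdegree three (one arc per literal), so with the tight budget $k=2m+n$ exactly two of its out-arcs must go, leaving one surviving arc that ``selects'' a literal. That arc points to an intermediate literal vertex $x_t$ or $x_f$ whose single out-arc feeds the variable center $x_0$ (indegree two, outdegree three). The only cross-gadget obstruction is then the two-arc path $c_0 \to x_t/x_f \to x_0$, which \emph{ends} at the high-outdegree vertex $x_0$ rather than passing through it; it is destroyed precisely when $(x_t,x_0)$ resp.\ $(x_f,x_0)$ is the one arc deleted in the variable gadget, i.e., when the selected literal is true. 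Polarity is encoded in which of the two entry points $x_t,x_f$ the clause attaches to, not in the direction from which a shared path crosses the variable vertex. If you want to salvage your version, adopting this ``terminate at the obstruction, don't traverse it'' idea is the cleanest fix.
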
%
\appendixproof{thm:funnels:arc deletion:np-hard}{%
\begin{proof}%
\begin{figure}[t]
	\centering
	\begin{tikzpicture}
		\input{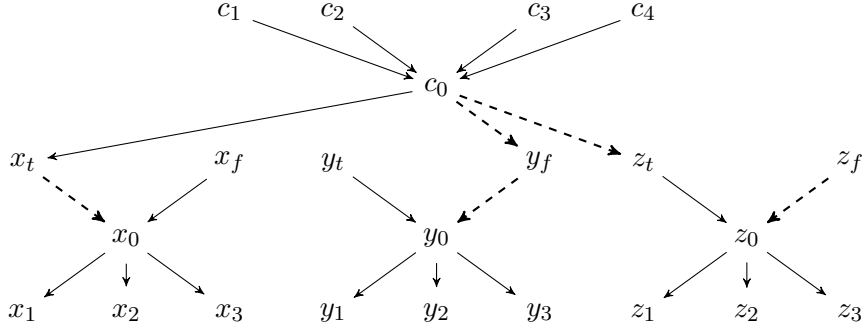}
	\end{tikzpicture}
	\caption{Example of the reduction for the formula $(x \lor \lnot y \lor z)$. Dashed arcs correspond to a solution for \pADDF\ on the reduced instance.}\label{fig:arcdelnphard}
\end{figure}%
We present a reduction from \ThreeSAT. Recall that in \ThreeSAT\ we are asked to decide the satisfiability of given a Boolean formula~$\phi$ in conjunctive normal form where every clause has exactly three distinct literals. Given a \ThreeSAT\ formula $\phi$ with $n$ variables and $m$ clauses, we create a DAG~$D$ consisting of the following variable gadgets and clause gadgets. \Cref{fig:arcdelnphard} illustrates the construction. For each variable $x$ we create the following \emph{variable gadget} introducing the vertex set $V_x$ and edge set~$A_x$:
\begin{compactitem}
\item $V_x=\{x_0, x_t, x_f, x_1, x_2, x_3\}$,
\item $A_x=\{(x_t, x_0), (x_f, x_0)\}\cup \{(x_0, x_i)\mid 1\le i\le 3\}$.
\end{compactitem}
We call $x_0$ the \emph{center} of the variable gadget for $x$. For each clause $c$, we create the following \emph{clause gadget}, introducing the vertex set~$V_c$ and edge set~$A_c$: 
\begin{compactitem}
\item $V_c=\{c_0, c_1, \ldots, c_4\}$,
\item $A_c=\{(c_i, c_0)\mid 1\le i\le 4\}$.
\end{compactitem}
We call $c_0$ the \emph{center} of the clause gadget for $c$. Furthermore, if variable $x$ appears non-negated in clause $c$, then we add the arc~$(c_0, x_t)$, and if variable $x$ appears negated in clause $c$, then we add the arc~$(c_0, x_f)$.
This completes the construction. It is easy to see that the DAG~$D$ can be constructed in polynomial time. We claim that $D$ has an arc-deletion distance to funnel of $k=2m+n$ if and only if $\phi$ is satisfiable.

$(\Leftarrow)$: Assume $\phi$ has a satisfying assignment. Then we construct an arc-deletion set of size $k=2m+n$ as follows: If a variable $x$ is set to true, we delete the arc $(x_t, x_0)$, otherwise we delete the arc $(x_f, x_0)$. For each clause $c$ we delete two of the three outgoing arcs of $c_0$, where we choose the remaining arc to be one that points to a literal that causes the clause to be satisfied by the assignment. This arc deletion set clearly has the correct size, it remains to show that it destroys all forbidden subgraphs of funnels in the constructed DAG. Note that after the arcs are deleted, there are only two types of vertices with indegree greater than one: The centers of clause gadgets and potentially vertices $x_t$ or $x_f$ from variable gadgets. The only vertices with outdegree greater than one remaining are the centers of variable gadgets. Because the outgoing arcs of clause gadgets point to literals that cause the clause to be satisfied, we have that all paths from clause gadget centers to vertex gadget centers are destroyed. By the same argument, there are no paths between vertices $x_t$ or $x_f$ from variable gadgets that have indegree greater than one and centers of variable gadgets. Hence, there is no path from a vertex with indegree greater than one to a vertex with outdegree greater than one.

$(\Rightarrow)$: First, note that all variable and clause gadgets are pair-wise arc-disjoint. It is easy to check that for each variable gadget at least one arc needs to be deleted and for each clause gadget at least two arcs need to be deleted. Since the number of arc deletions has to be at most $k=2m+n$, the arc deletion set contains exactly one arc from each variable gadget and exactly two arcs from each clause gadget. This implies that for clause gadgets, the two of the outgoings arcs of the center need to be deleted and for variable gadgets, one of the incoming arcs of the center needs to be deleted. We claim that the arcs deleted from the variable gadgets induce a satisfying assignment in a straightforward manner: if the arc~$(x_t, x_0)$ is deleted, set variable $x$ to true, otherwise to false. Take any clause $c$ of $\phi$, one of the outgoing arcs from the center of the clause gadget of $c$ remains, and this arc has to point to a vertex with outdegree zero, otherwise there is a path from a center of a clause gadget to a center of a variable gadget and hence a forbidden subgraph. This means that clause $c$ is satisfied. This completes the proof.
\end{proof}}%

\subsubsection*{A Factor-2 Approximation Algorithm.}
\label{subsec:funnel:arc deletion:approximation}
\newcommand{\Label}[1][]{\Funname{label}_{#1}}%
\newcommand{\ApLabel}{\ensuremath{L_a}}
\newcommand{\OpLabel}{\ensuremath{L^*}}
\newcommand{\InLabel}{\ensuremath{L}}
We now give a linear-time factor\nobreakdash-2 approximation algorithm for \pADDF.
We mention in passing that on tournament DAGs the algorithm always finds an optimal solution and on real-world DAGs, the approximation factor is typically close to one (see \cref{sec:experiments}). 
The approximation algorithm works in three phases and makes extensive use of \allglab s (defined in \Cref{sec:funnels}).
First, we greedily compute an \allglab\ which we call~$\ApLabel$ for the input graph (assigning each vertex~$v$ a \Fork or a \Merge label).
The labeling will be a funnel labeling of the output funnel indicating for each vertex whether it can have indegree or outdegree greater than one.
To construct~$\ApLabel$, we try to minimize the number of arcs to be removed when only considering~$v$.
This strategy guarantees that, if the approximation algorithm assigns the wrong label to~$v$, in the optimal solution many arcs incident to $v$ need to be removed.
This allows us to derive the approximation factor.
Formally, we assign a label to a vertex~$v$ using the following rule.
\[
	\ApLabel(v) \Set
	\begin{cases}
		\Fork,  & \text{if }\Out[D]{v} > \In[D]{v},\\
		\Fork,  & \text{if }\Out[D]{v} = \In[D]{v} \land \phantom{}\\
                & \phantom{} \exists u \in \InN{v} : \ApLabel(u) = \Fork,\\
		\Merge, & \text{otherwise.}
	\end{cases}
\]
Since we can assign a label whenever we know the labels of all incoming neighbors, the label of each vertex can be computed, in linear time, by following a topological ordering of the DAG.

In the second phase, after assigning labels to all vertices, we \emph{satisfy} the labels by removing arcs.
That is, for each \Fork vertex~$v$, we choose an arbitrary inneighbor $u$ with $\InLabel(u) = \Fork$ (if it exists) and remove all arcs incoming to~$v$ from vertices other than~$u$. Similarly, for each \Merge vertex~$v$ we choose an arbitrary outneighbor $u$ with $\InLabel(u) = \Merge$ (if it exists) and remove all arcs outgoing from~$v$ to vertices other than~$u$.
See \cref{alg:funnel:arc deletion:approximation} for the pseudocode of the second phase. For use below we call the second-phase algorithm \ArcDelApprox.
\begin{algorithm}[t]
\begin{algorithmic}[1]
	\Function{ArcDeletionSet}{DAG~$D = (V,A)$, $\InLabel : V \rightarrow \{\Fork, \Merge\}$}
        \State $B \gets \emptyset$
        \ForAll{$v \in V$}
        \If{$\InLabel(v) = \Merge$}
        \State Choose an arbitrary $u \in \OutN{v}$ with $\InLabel(u) = \Merge$ (if it exists)
        \State $B \gets B \cup \{(v, w) \mid w \neq u \wedge w \in \OutN{v}\}$\label{alg:funnel:arc-deletion approximation:w1}
        \label{alg:funnels:arc-deletion approximation:merge}
        \ElsIf{$\InLabel(v) = \Fork$}
        \State Choose an arbitrary $u \in \InN{v}$ with $\InLabel(u) = \Fork$ (if it exists)
        \State $B \gets B \cup \{(w, v) \mid w \neq u \wedge w \in \InN{v}\}$
        \EndIf
        \EndFor
        \State \Return $B$
	\EndFunction
\end{algorithmic}
\caption{Satisfying an \allglab.}
\label{alg:funnel:arc deletion:approximation}
\end{algorithm}%

\looseness=-1 In the third phase, we \emph{greedily relabel} vertices, that is, we iterate over each vertex $v$ (in an arbitrary order), changing $v$'s label if the change immediately leads to an improvement in the solution size.
To check if there is an improvement, we only need to consider the incident arcs of $v$ and the labels of its endpoints.
This completes the description of our approximation algorithm.

To argue about optimal solutions and for use in a search-tree algorithm below, we now show that if the input \allglab\ $L$ corresponds to an optimal solution, then \ArcDelApprox\ outputs an optimal arc set: Say that an \allglab~$L$ of a DAG~$D$ is \emph{optimal} if it is a \funlab for some funnel~$D - A'$, $A' \subseteq A$, such that $A'$ has minimum size among all arc sets whose deletion makes $D$ a funnel.
\begin{proposition}
\label{prop:funnels:arc deletion approx=opt}
Let~$D = (V,A)$ be a DAG, let~$A' \subseteq A$ be a minimum arc set such that~$D' = D - A'$ is a funnel, and let $\OpLabel$ be an \optlab\ for~$D'$. Then $\Abs{\text{\ArcDelApprox}(D, \OpLabel)} = \Abs{A'}$.
\end{proposition}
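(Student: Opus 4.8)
The plan is to classify every arc $(x,y)$ of $D$ by the pair of labels that \OpLabel\ assigns to its endpoints, giving four types $(\Fork,\Fork)$, $(\Fork,\Merge)$, $(\Merge,\Fork)$, and $(\Merge,\Merge)$, and to show that the set $B$ returned by \ArcDelApprox on input $(D,\OpLabel)$ and the optimal set $A'$ delete exactly the same number of arcs of each type. Summing over the four types then yields $\Abs{B} = \Abs{A'}$.

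First I would pin down which arcs $B$ deletes by reading off \cref{alg:funnel:arc deletion:approximation}. An arc $(x,y)$ can be removed by the processing of its tail only if the tail is a \Merge vertex (which deletes outgoing arcs), and by the processing of its head only if the head is a \Fork vertex (which deletes incoming arcs). Hence every $(\Merge,\Fork)$-arc is deleted, every $(\Fork,\Merge)$-arc survives, for each \Fork vertex exactly one incoming $(\Fork,\Fork)$-arc survives when that vertex has a \Fork inneighbor in $D$ and none survives otherwise, and symmetrically for each \Merge vertex and its outgoing $(\Merge,\Merge)$-arcs. So $B$ keeps one $(\Fork,\Fork)$-arc per \Fork vertex that has a \Fork inneighbor in $D$, one $(\Merge,\Merge)$-arc per \Merge vertex that has a \Merge outneighbor in $D$, and deletes all remaining $(\Fork,\Fork)$-, $(\Merge,\Fork)$-, and $(\Merge,\Merge)$-arcs.

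For the side of $A'$, since \OpLabel\ is a \funlab of the funnel $D' \Set D - A'$, each \Fork vertex has indegree at most one in $D'$, each \Merge vertex has outdegree at most one in $D'$, and by \cref{obs:funnel:labels} there is no $(\Merge,\Fork)$-arc in $D'$; thus $A'$ contains all $(\Merge,\Fork)$-arcs. The heart of the proof is to show, using the minimality of $A'$, that the retention in $D'$ is as generous as possible: (a) $A'$ deletes no $(\Fork,\Merge)$-arc, and (b) every \Fork vertex with a \Fork inneighbor in $D$ retains one in $D'$, and symmetrically for \Merge vertices. Each part is an exchange argument: if $A'$ violated it, I would add the offending single arc back to $D'$; this keeps all \Fork-indegrees and \Merge-outdegrees at most one, creates no new $(\Merge,\Fork)$-arc, and stays acyclic as a subgraph of the DAG $D$, so the degree characterization \cref{thm:funnel:characterization}(\ref{chr:degree}) still certifies a funnel (any vertex of indegree greater than one is then a \Merge vertex, from which only \Merge vertices of outdegree at most one are reachable), contradicting minimality.

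Combining both sides, $B$ and $A'$ delete all $(\Merge,\Fork)$-arcs and no $(\Fork,\Merge)$-arc, and each keeps exactly one $(\Fork,\Fork)$-arc for precisely the \Fork vertices having a \Fork inneighbor in $D$ and one $(\Merge,\Merge)$-arc for precisely the \Merge vertices having a \Merge outneighbor in $D$; hence the two sets delete equally many arcs of every type and $\Abs{B} = \Abs{A'}$. I expect the exchange arguments of the previous paragraph to be the main obstacle, compounded by the subtlety that \ArcDelApprox may keep a \emph{different} representative arc at a vertex than $A'$ does, so one cannot hope to prove $B \subseteq A'$ and must instead match the two deletion sets type-by-type through their cardinalities.
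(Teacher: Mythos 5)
Your proposal is correct and follows essentially the same route as the paper's proof: classify arcs by the \OpLabel-types of their endpoints, observe that both $B$ and $A'$ contain every \MergeF/-to-\ForkF/ arc and no \ForkF/-to-\MergeF/ arc, and then match the counts of deleted same-label arcs by noting that each \Fork vertex retains exactly one incoming arc precisely when it has a \Fork inneighbor in $D$ (and symmetrically for \Merge vertices), which is the paper's ``arcs in a forest equals vertices minus roots'' count. Your explicit exchange argument for why a minimum $A'$ retains such an arc is a detail the paper leaves implicit, but it is the same underlying idea.
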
%
\appendixproof{prop:funnels:arc deletion approx=opt}{%
\begin{proof}	
	Let $(v, u) \in A$.	
  We distinguish the possible cases of the labeling of $u$ and $v$. First, we treat two simple cases in which we can argue that $A'$ and \ArcDelApprox either both contain $(v, u)$ or both do not contain~$(v, u)$.

  The first case is when~$\OpLabel(v) = \Merge$ and~$\OpLabel(u) = \Fork$.
  Then $(v, u)$ has to be both in~$A'$ as well as in the solution given by \ArcDelApprox, which we call from now on~$B$.  
	It is clearly in~$B$ since it was added to the solution on \Cref{alg:funnels:arc-deletion approximation:merge} of \cref{alg:funnel:arc deletion:approximation}.
	Due to \cref{obs:funnel:labels}, we know that $(v,u)$ is also in $A'$.

The second case is when~$\OpLabel(v) = \Fork$ and~$\OpLabel(u) = \Merge$.
In this case, clearly, removing $(v, u)$ will not destroy any forbidden subgraph, since~$\forall w \in \InC[D'] {v} : \OpLabel(w) = \Fork$.
Since \ArcDelApprox does not remove the arc, it is neither present in~$B$ nor in~$A'$.

For the remaining cases we cannot guarantee that exactly the same decision was taken with respect to $(v, u)$.
We instead argue about the total number of arcs removed between vertices with the same label.
From \cref{thm:funnel:characterization}(\ref{chr:degree}) we know that \Fork vertices form an induced outforest in $D$, while \Merge vertices form an induced inforest.
The number of arcs in an in- or outforest is given by the number of vertices minus the number of roots (i.e.\ sources or sinks).
All incoming arcs of a \Fork vertex $v$ are removed by \ArcDelApprox only if $v$ has no inneighbors labeled with \Fork.
Hence, $v$ is a source in $D - B$ if and only if it is a source in $D - A'$.
This implies that the number of arcs in the outforest composed of \Fork vertices is the same in $D - B$ as in $D - A'$.
An analogous argument holds for the inforest induced by \Merge vertices.
Hence, the total number of arcs between equally labeled vertices is the same in $B$ and $A'$.
Since these were all cases and in all of them \ArcDelApprox deletes as many arcs as the optimal solution, we conclude that~$\Abs{B} = \Abs{A'}$.
%
%
\end{proof}
}

We now give a guarantee of the approximation factor.
\toappendix{Example for the tightness of the approximation factor of two}%
{app:approx factor example}%
{%
A DAG where the approximation algorithms removes twice as many arcs as an optimal solution is given in \cref{fig:approx factor example}.
\begin{figure}[t]
	\centering
	\begin{tikzpicture}
		\input{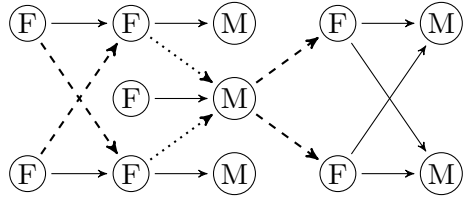}
	\end{tikzpicture}
	\caption{Example of the execution of \Algname{ArcDeletionSet}.
Vertices with an F received the label \Fork, and those with an M received the label \Merge.
The approximation algorithm returns the four dashed arcs, while there is an optimal solution (dotted arcs) of size two.
Note that changing any single label will not improve the approximate solution.}
	\label{fig:approx factor example}
\end{figure}
}
\ifappendix%
Due to space constraints we provide only a proof sketch. The technical details are in \cref{proof:thm:arc deletion:approx factor 2} and an example for the tightness of the approximation factor is in \cref{app:approx factor example}.
\fi

\begin{theorem}
	\label{thm:arc deletion:approx factor 2}
	There is a linear-time factor-two approximation for \pADDF.
\end{theorem}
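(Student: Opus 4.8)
The plan is to establish three things about the three-phase algorithm: its output is always a funnel, it runs in linear time, and it deletes at most twice as many arcs as an optimal solution. For correctness I would argue via the degree characterization \cref{thm:funnel:characterization}(\ref{chr:degree}). After \ArcDelApprox\ has satisfied the labeling, every \Fork vertex has indegree at most one and every \Merge vertex has outdegree at most one, and, since each \Merge vertex retains only an out-arc to a \Merge neighbour, no arc survives from a \Merge vertex to a \Fork vertex. Hence any vertex with indegree greater than one is a \Merge vertex, and everything reachable from it is again a \Merge vertex and thus has outdegree at most one; this is exactly the condition of \cref{thm:funnel:characterization}(\ref{chr:degree}), so the result is a funnel. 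The relabelling phase only accepts a change that does not increase the solution size and then re-satisfies the labelling, so this property is preserved. For the running time, $\ApLabel$ is computed in one sweep along a topological order (the rule for $v$ inspects only $\In[D]{v}$, $\Out[D]{v}$ and the already-fixed labels of $v$'s in-neighbours), \ArcDelApprox\ is a single pass over all vertices and their incident arcs, and the relabelling pass inspects each vertex and its incident arcs a constant number of times; each phase is therefore $\Bo(\Abs{V}+\Abs{A})$.

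The heart of the proof is the factor-two guarantee. Let $A'$ be an optimal arc-deletion set, let $D' = D - A'$, and let $\OpLabel$ be an \optlab\ for $D'$; by \cref{prop:funnels:arc deletion approx=opt} we have $\Abs{A'}=\Abs{\text{\ArcDelApprox}(D,\OpLabel)}$, so it suffices to compare the two labelings through \ArcDelApprox. Writing $B$ for the set the algorithm deletes, I would first charge every arc of $B$ to a single endpoint: an arc removed because its head is a \Fork vertex is charged to that head, and an arc removed because its tail is a \Merge vertex (with non-\Fork head) is charged to that tail. The greedy rule guarantees $\ApLabel(v)=\Fork$ only when $\Out[D]{v}\ge\In[D]{v}$ and $\ApLabel(v)=\Merge$ only when $\In[D]{v}\ge\Out[D]{v}$, so the charge received by any vertex $v$ is at most $\min(\In[D]{v},\Out[D]{v})$, and at most $\min(\In[D]{v},\Out[D]{v})-1$ whenever $v$ keeps one incident arc to an equally-labelled neighbour. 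Summing gives the clean upper bound $\Abs{B}\le\sum_v\min(\In[D]{v},\Out[D]{v})$.

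For the matching lower bound I would use that in the funnel $D'$ every vertex satisfies $\min(\In[D']{v},\Out[D']{v})\le 1$, so the number of arcs of $A'$ incident to $v$ is at least $\min(\In[D]{v},\Out[D]{v})-1$. Moreover, whenever $\OpLabel$ disagrees with the greedy preference at $v$ (for instance $\OpLabel(v)=\Fork$ while $\In[D]{v}>\Out[D]{v}$) this improves to $\min(\In[D]{v},\Out[D]{v})$, since the optimum must then delete $\In[D]{v}-1\ge\Out[D]{v}$ incoming arcs; this is the precise sense in which a ``wrong'' label forces the optimum to pay. As each arc is incident to two vertices, summing over $v$ yields $2\Abs{A'}\ge\sum_v\min(\In[D]{v},\Out[D]{v})-r$, where $r$ counts only the vertices at which the ``$-1$'' is actually realised.

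The main obstacle, and the step I would spend the most care on, is showing that these realised boundary terms cancel rather than contributing an additive $\Bo(\Abs{V})$ loss. A ``$-1$'' at a vertex corresponds to keeping one arc to an equally-labelled neighbour, that is, to the vertex being a non-root of the induced in-/out-forests guaranteed by \cref{thm:funnel:characterization}(\ref{chr:degree}). I would therefore recast both $\Abs{B}$ and $\Abs{A'}$ through these forests, writing each as $\Abs{A}$ minus the retained \Fork-to-\Fork, \Merge-to-\Merge and \Fork-to-\Merge arcs, and bound the number of roots produced by the greedy labelling in terms of those of $\OpLabel$. Here the tie-breaking rule (labelling $v$ as \Fork when $\Out[D]{v}=\In[D]{v}$ and a \Fork in-neighbour exists) and the local optimality enforced by the relabelling phase—no single label flip decreases the solution—are exactly what guarantee that the algorithm retains an incident arc whenever the optimum can, so that the realised boundary terms on the two sides match to within the permitted factor. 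Combining the upper and lower bounds then gives $\Abs{B}\le 2\Abs{A'}$, completing the argument.
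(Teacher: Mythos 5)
Your correctness argument (via \cref{thm:funnel:characterization}(\ref{chr:degree})) and your running-time analysis are fine and agree with the paper's. The gap is in the factor-two analysis. The two bounds you establish --- $\Abs{B}\le\sum_{v}\min(\In{v},\Out{v})-s$, where $s$ counts vertices at which the algorithm retains an incident arc to an equally-labelled neighbour, and $2\Abs{A'}\ge\sum_{v}\min(\In{v},\Out{v})-r$, where $r$ counts vertices at which the optimum deletes only $\min(\In{v},\Out{v})-1$ incident arcs --- are each individually valid, but they yield $\Abs{B}\le 2\Abs{A'}$ only if $r\le s$, and that is precisely the cancellation you defer to the last step (``the algorithm retains an incident arc whenever the optimum can''). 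That claim is false, and the paper's own tightness instance (\cref{fig:approx factor example}) refutes the whole chain. That instance has a vertex $k$ with three in-neighbours and out-neighbours $l,n$; each of $l,n$ has outdegree two (to sinks) and indegree one; two in-neighbours $e,g$ of $k$ have two common source in-neighbours and one further sink out-neighbour each; the third in-neighbour of $k$ is a source. The greedy labelling makes $e,g,l,n$ \Fork{} and $k$ \Merge, and the algorithm deletes one in-arc of $e$, one in-arc of $g$, and both of $(k,l),(k,n)$, so $\Abs{B}=4$, while $A'=\{(e,k),(g,k)\}$ gives $\Abs{A'}=2$. Here $\sum_v\min(\In{v},\Out{v})=2+2+2+1+1=8$ and $s=2$ (only $e$ and $g$ keep an arc to an equally-labelled neighbour; every neighbour of $k$, $l$ and $n$ carries the opposite label), so your upper bound is $8-2=6>4=2\Abs{A'}$. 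No argument that bounds $\Abs{B}$ by $\sum_v\min(\In{v},\Out{v})-s$ can therefore reach the factor two; equivalently, $r\ge 4>2=s$ here, since the optimum realises the ``$-1$'' at $l$ and $n$ (it deletes nothing incident to them) exactly where the algorithm retains nothing.

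The underlying problem is that degree-based accounting is blind to the label of the opposite endpoint. In the instance above your charging assigns nothing to $k$ (both deleted out-arcs have \Fork{} heads and are charged to $l$ and $n$) yet grants $k$ a budget of $\min(\In{k},\Out{k})=2$, and it is the optimum's two deletions at $k$ that must pay for the algorithm's deletions at $l$ and $n$. Routing that payment across the arc is exactly what the paper's proof is built to do: it fixes an optimal labelling $\OpLabel$, classifies each vertex by the pair $(\ApLabel(v),\OpLabel(v))$, assigns every deleted arc of $B$ and of $A'$ to one endpoint depending on the types of \emph{both} endpoints, proves $\Abs{\Apf(v)}=\Abs{\Opf(v)}$ for correctly labelled vertices and $\Abs{\Apf(v)}\le 2\Abs{\Opf(v)}$ up to correction terms ($o_{\MergeF/\ForkF/}$ and $i_{\ForkF/\MergeF/}$ in its notation) for mislabelled ones, and then observes that these correction terms cancel when summed over all vertices. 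Some such cross-vertex exchange is unavoidable; your forest/root recasting would have to reintroduce it, so the missing step is not a technicality but the core of the theorem.
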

\ifappendix%
\begin{proof}[Sketch]
	Let $A'$ be a minimum arc set such that $D - A'$ is a funnel and $B = \ArcDelApprox(D, \ApLabel)$, where $\ApLabel$ is computed by the above described procedure.
Let $\OpLabel$ be an optimal \allglab\ for the input DAG~$D = (V, A)$.
We define two functions $\Apf : V \rightarrow \mathcal{P}(B)$ and $\Opf : V \rightarrow \mathcal{P}(A')$ such that $\biguplus_{v \in V}\Apf(v) = B$ and $\biguplus_{v \in V} \Opf(v) = A'$, where $\biguplus$ is a disjoint union and $\mathcal{P}(X)$ denotes the family of all subsets of a set~$X$.
Our goal is to assign each arc in $A'$ or $B$ to one of its endpoints via $a$ or $b$, respectively, such that $\Abs{b(v)} \leq 2\Abs{a(v)}$ for every $v \in V$.
We say that a vertex $v$ has \emph{type} $T(v) = \ForkF/\MergeF/$ if $\ApLabel(v) = \Fork$ and $\OpLabel(v) = \Merge$.
The types \ForkF/\ForkF/, \MergeF/\MergeF/ and \MergeF/\ForkF/ are defined analogously.
A vertex $v$ is \emph{correctly labeled} if $\ApLabel(v) = \OpLabel(v)$.

We define $a$ and $b$ in such a way that $\Abs{b(v)} = \Abs{a(v)}$ if $v$ is correctly labeled.
To this end, we only assign a removed arc to a correctly labeled vertex $v$ if both endpoints are correctly labeled.
For an incorrectly labeled vertex, we assign the arcs which are potentially removed by \ArcDelApprox when considering $v$, together with those of correctly labeled vertices.
We additionally need to define $a$ and $b$ in such a way that no arc is assigned to both endpoints.

By construction, it is easy to show that $\Abs{\Apf(v)} = \Abs{\Opf(v)}$ if $v$ is correctly labeled.
We now consider an incorrectly labeled vertex $v$.

If $\In{v} = 1 = \Out{v}$, then the approximation removes at most one of the incident arcs of $v$.
If both are removed, then the algorithm changes the label in the third phase, which implies that $v$ was correctly labeled either before or after the change.
As only one arc of $v$ is removed, we can treat $(u,v)$ and $(v,w)$ as a single arc $(u,w)$ and assign it to the same vertex that $(u,w)$ would be assigned to if it would be removed.

For the remaining cases, we use a counting argument based on the amount of neighbors of $v$ with each type.
We additionally need an exchange argument, that is, whenever we have an arc $(v,u) \in A'$ where $T(v) = \ForkF/\MergeF/$ and $T(u) = \MergeF/\ForkF/$, some arc in $\Apf(u)$ needs to be assigned to $\Apf(v)$ instead.
The exchange is possible because such arcs are always in $A'$ but never in $B$, meaning that the approximation has an ``advantage'' over the optimal solution with respect to these arcs.

Because the functions $\Opf$ and $\Apf$ partition $A'$ and $B$, respectively, we obtain that $\Abs{\biguplus_{v \in V}\Apf(v)} = \Abs{B} \leq 2\Abs{A'} = 2\Abs{\biguplus_{v \in V} \Opf(v)}$.
\end{proof}%
\fi%
\appendixproof{thm:arc deletion:approx factor 2}{
\begin{proof}
After computing $B = \ArcDelApprox(D, \ApLabel)$, the approximation algorithm iterates over $D - B$, flipping labels whenever the flip leads to an improvement in the solution.
This implies that, if we remove all incoming arcs of a vertex $v$ with $\ApLabel(v) = \Merge$, then we set the label of $v$ to $\Fork$ instead.
Analogously, we flip the label of $v$ if all of its outgoing arcs have been removed and $\ApLabel(v) = \Fork$.

Let $B = \ArcDelApprox(D, \ApLabel)$ and let $A'$ be a minimum arc set such that $D - A'$ is a funnel.
Let $\OpLabel$ be an optimal \allglab\ for the input DAG~$D = (V, A)$ such that $A' = \ArcDelApprox(D, \OpLabel)$.
We define two functions $\Apf : V \rightarrow \mathcal{P}(B)$ and $\Opf : V \rightarrow \mathcal{P}(A')$ such that $\biguplus_{v \in V}\Apf(v) = B$ and $\biguplus_{v \in V} \Opf(v) = A'$, where $\biguplus$ is a disjoint union and $\mathcal{P}(X)$ denotes the family of all subsets of a set~$X$.
Our goal is to assign each arc in $A'$ and $B$ to one of its endpoints via $a$ and $b$, respectively, such that $\Abs{b(v)} \leq 2\Abs{a(v)}$ for every $v \in V$.
We say that a vertex $v$ has type $T(v) = \ForkF/\MergeF/$ if $\ApLabel(v) = \Fork$ and $\OpLabel(v) = \Merge$.
The types \ForkF/\ForkF/, \MergeF/\MergeF/ and \MergeF/\ForkF/ are defined analogously.
A vertex $v$ is correctly labeled if $\ApLabel(v) = \OpLabel(v)$.

We define $a$ and $b$ in such a way that $\Abs{b(v)} = \Abs{a(v)}$ if $v$ is correctly labeled.
To this end, we only assign a removed arc to a correctly labeled vertex $v$ if both endpoints are correctly labeled.
For an incorrectly labeled vertex, we assign the arcs which are potentially removed by \ArcDelApprox when considering $v$, together with those of correctly labeled vertices.
We additionally need to define $a$ and $b$ in such a way that no arc is assigned to both endpoints.
Refer to \cref{fig:approx:arcs partition} for a graphical representation of $a$ and $b$.
\begin{figure}[t]
		\centering
		\begin{tikzpicture}
\node[inner sep=1.0pt,radius=3pt] (path4136-62) at (0.75, 0.0) [align=left]{\MergeF/\MergeF/};
\node[inner sep=1.0pt,radius=3pt] (path4136-35) at (0.0, 0.0) [align=left]{\ForkF/\ForkF/};
\node[inner sep=1.0pt,radius=3pt] (path4136-9) at (2.25, 0.0) [align=left]{\ForkF/\MergeF/};
\node[inner sep=1.0pt,radius=3pt] (path4136-1) at (3.0, 0.0) [align=left]{\MergeF/\ForkF/};
\node[inner sep=1.0pt,radius=3pt] (path4136) at (1.5, 1.25) [align=left]{\ForkF/\MergeF/};
\node[inner sep=1.0pt,radius=3pt] (path4136-3) at (0.0, 2.5) [align=left]{\ForkF/\ForkF/};
\node[inner sep=1.0pt,radius=3pt] (path4136-7) at (2.25, 2.5) [align=left]{\ForkF/\MergeF/};
\node[inner sep=1.0pt,radius=3pt] (path4136-6) at (0.75, 2.5) [align=left]{\MergeF/\MergeF/};
\node[inner sep=1.0pt,radius=3pt] (path4136-5) at (3.0, 2.5) [align=left]{\MergeF/\ForkF/};
\draw[bend right=25, pictikz-dotted, pictikz-edgeto] (path4136) edge (path4136-35);
\draw[draw=pictikz-black, pictikz-dashed, pictikz-edgeto] (path4136-3) edge (path4136);
\draw[bend left=25, pictikz-dashed, pictikz-edgeto] (path4136-6) edge (path4136);
\draw[draw=pictikz-black, pictikz-dashed, pictikz-edgeto] (path4136-7) edge (path4136);
\draw[draw=pictikz-black, pictikz-dashed, pictikz-edgeto] (path4136-5) edge (path4136);
\draw[draw=pictikz-black, pictikz-dashed, pictikz-edgeto] (path4136) edge (path4136-35);
\draw[draw=pictikz-black, pictikz-dotted, pictikz-edgeto] (path4136) edge (path4136-62);
\draw[draw=pictikz-black, pictikz-dotted, pictikz-edgeto] (path4136) edge (path4136-9);
\draw[draw=pictikz-black, pictikz-dotted, pictikz-edgeto] (path4136) edge (path4136-1);
\draw[draw=pictikz-black, pictikz-dotted, pictikz-edgeto] (path4136-6) edge (path4136);
		\end{tikzpicture}
		\quad
		\begin{tikzpicture} 
			\input{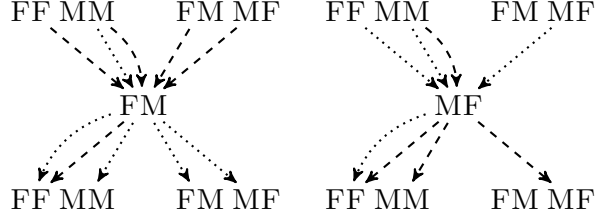}
		\end{tikzpicture}\\[0.5\baselineskip]
		\caption{Graphical representation of \Opf\ and \Apf.
		Vertices are identified with their type.
		Arcs are assigned to the vertex in the middle.		
		Dashed arcs correspond to arcs counted in $\Apf$, while dotted arcs are counted in $\Opf$.}
		\label{fig:approx:arcs partition}
	\end{figure}
\begin{align*}
		\Apf(v) & \Set \begin{cases}
				B \cap \{(u,v) \mid T(u) = \ForkF/\ForkF/\}, & T(v) = \ForkF/\ForkF/,\\
		    B  \cap \{(v,u) \mid T(u) = \ForkF/\ForkF/ \lor T(u) = \MergeF/\MergeF/\},        & T(v) = \MergeF/\MergeF/,\\
		    B \cap ( \{(u,v) \mid u \in \InN{v}\} \cup \{(v,u) \mid T(u) = \ForkF/\ForkF/\} ), & T(v) = \ForkF/\MergeF/,\\
		    B  \cap ( \{ (u,v) \mid T(u) = \MergeF/\MergeF/ \} \cup \{(v,u) \mid T(u) \neq \ForkF/\MergeF/\} ), & T(v) = \MergeF/\ForkF/.%
		    \end{cases}\\
		\Opf(v) & \Set \begin{cases}
		    A' \cap \{(u,v) \mid T(u) = \ForkF/\ForkF/\}, & T(v) = \ForkF/\ForkF/,\\
		    A' \cap \{(v,u) \mid T(u) = \ForkF/\ForkF/ \lor T(u) = \MergeF/\MergeF/\}, & T(v) = \MergeF/\MergeF/,\\
		    A' \cap ( \{(v,u) \mid u \in \OutN{v}\} \cup \{(u,v) \mid T(u) = \MergeF/\MergeF/\} ), & T(v) = \ForkF/\MergeF/,\\
		    A' \cap ( \{(v,u) \mid T(u) = \ForkF/\ForkF/\} \cup \{(u,v) \mid T(u) \neq \ForkF/\MergeF/ \} ), & T(v) = \MergeF/\ForkF/.
		    \end{cases}
	\end{align*}
	
	We now consider each vertex type $t$ and argue that $\Abs{b(v)} \leq 2\Abs{a(v)}$ for every vertex $v$ with $T(v) = t$.
	By construction of $a$ and $b$, this is easy to prove for correctly labeled vertices.
	Further, induced paths of any length behave just like an induced with three vertices, and so we only need to consider the latter case.	
\begin{lemma}
		\label{lemma:funnels:arc deletion approx:L = label}
		If $\ApLabel(v) = \OpLabel(v)$ or $\In{v} = \Out{v} = 1$, then $\Abs{\Apf(v)} = \Abs{\Opf(v)}$.
	\end{lemma}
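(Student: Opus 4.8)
The plan is to prove the two hypotheses of the lemma separately, as they correspond to two structurally different ``easy'' cases. Throughout I would rely on one shared fact about \ArcDelApprox, read off from its description and \cref{thm:funnel:characterization}(\ref{chr:degree}): after running \ArcDelApprox with any labeling $L$, every vertex that $L$ labels \Fork has in-degree at most one and every vertex that $L$ labels \Merge has out-degree at most one; moreover such a vertex keeps one incoming (resp.\ outgoing) arc precisely when it has at least one in-neighbor (resp.\ out-neighbor) that $L$ also labels \Fork (resp.\ \Merge). Both $\ApLabel$ and $\OpLabel$ are labelings to which this applies, producing $B$ and $A'$ respectively.

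First I would handle the correctly labeled case $\ApLabel(v)=\OpLabel(v)$, where $T(v)\in\{\ForkF/\ForkF/,\MergeF/\MergeF/\}$. For $T(v)=\ForkF/\ForkF/$ the sets $\Apf(v)$ and $\Opf(v)$ collect only deleted arcs $(u,v)$ with $T(u)=\ForkF/\ForkF/$; writing $N$ for the set of such in-neighbors, every $u\in N$ is labeled \Fork by \emph{both} labelings, so each is a legal arc to retain when \ArcDelApprox processes $v$ under either labeling. Hence, retaining a member of $N$ whenever $N\neq\emptyset$ (justified in the last paragraph), $v$ keeps exactly one arc from $N$ if $N\neq\emptyset$ and none otherwise, giving $\Abs{\Apf(v)}=\Abs{\Opf(v)}$ in both situations. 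For $T(v)=\MergeF/\MergeF/$ the counted arcs split into out-arcs to \ForkF/\ForkF/ vertices, which are \Merge-to-\Fork arcs and therefore lie in both $B$ and $A'$ by \cref{obs:funnel:labels}, and out-arcs to \MergeF/\MergeF/ vertices, which are counted exactly as in the \ForkF/\ForkF/ case above; both parts contribute equally, so again $\Abs{\Apf(v)}=\Abs{\Opf(v)}$.

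Next I would treat $\In{v}=\Out{v}=1$ with $v$ incorrectly labeled, so $T(v)\in\{\ForkF/\MergeF/,\MergeF/\ForkF/\}$; let $u$ be its unique in-neighbor and $w$ its unique out-neighbor. The key claim is that \ArcDelApprox deletes at most one of $(u,v)$ and $(v,w)$: the single in- (resp.\ out-) neighbor is always retained unless it forces a label mismatch, and if both incident arcs were deleted then $v$ loses all incoming \emph{and} all outgoing arcs, which triggers the third-phase relabeling and makes $v$ correctly labeled, reducing to the previous case. With at most one incident arc deleted, I would contract $v$: by the subdivision characterization \cref{thm:funnel:characterization}(\ref{chr:forbidden minor}) a degree-$(1,1)$ vertex is merely a subdivision point, so the decision at $v$ mirrors the decision on the contracted arc $(u,w)$, and the single deleted incident arc (if any) is charged by $\Apf$ and by $\Opf$ to the same endpoint that would receive $(u,w)$. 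Going through the few subcases of $T(u)$ and $T(w)$ then yields $\Abs{\Apf(v)}=\Abs{\Opf(v)}\in\{0,1\}$.

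The main obstacle is the tie-breaking hidden in the ``choose an arbitrary $u$'' step: a priori $B$ could retain a \ForkF/\MergeF/ in-neighbor of a \ForkF/\ForkF/ vertex while $A'$ is forced to retain the \ForkF/\ForkF/ one, making the two counts differ by one and breaking the claimed equality. I would remove this obstacle by observing that $\Abs{B}$, and likewise $\Abs{A'}$, is invariant under these arbitrary choices: whenever a \Fork in-neighbor exists, exactly $\In{v}-1$ incoming arcs are deleted, regardless of which one is kept. It is therefore legitimate to fix, once and globally, a tie-breaking that retains a same-type neighbor whenever one is available; this leaves $\Abs{B}$ unchanged and makes the event ``$v$ keeps an arc from $N$'' hold simultaneously for $B$ and $A'$, which is exactly what the counting in the correctly labeled case requires. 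Verifying that this global choice is compatible with the third-phase relabeling used in the degree-$(1,1)$ case is the remaining point that needs care.
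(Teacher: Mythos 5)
Your proposal is correct and follows essentially the same route as the paper's proof: the same without-loss-of-generality tie-breaking that prefers correctly labeled neighbors, the same neighbor-type counting for correctly labeled vertices, and the same contraction-plus-relabeling treatment of degree-$(1,1)$ vertices. The only presentational difference is that you invoke \cref{obs:funnel:labels} explicitly for the \MergeF/-to-\ForkF/ out-arcs, which the paper handles implicitly via the bounds on $\Abs{\Apf(v)}$ and $\Abs{\Opf(v)}$.
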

	{
	\begin{proof}
		When deciding which incident arc of a vertex $v$ is kept, \ArcDelApprox\ makes an arbitrary choice among the valid possibilities.
		However, all choices lead to a solution of the same size.
		Hence, we can assume, without loss of generality, that if \ArcDelApprox\ can keep an arc from a correctly labeled neighbor, then it does so.
		This allows us to assume, for the sake of this analysis, that, if an arc between two correctly labeled vertices is removed by the approximation algorithm, then it is also removed in an optimal solution.
	Formally, we can assume the following for any $v \in V$ which is correctly labeled.
	If $\OpLabel(v) = \Fork$ and there is some correctly labeled $u \in \InN{v}$ with $\OpLabel(u) = \Fork$, then $(w,v) \in A' \cap B$ for all incorrectly labeled $w \in \InN{v}$.
	If $\OpLabel(v) = \Merge$ and there is some correctly labeled $u \in \OutN{v}$ with $\OpLabel(u) = \Merge$, then $(v,w) \in A' \cap B$ for all incorrectly labeled $w \in \OutN{v}$.
	
	We now show for every correctly labeled $v$ that $\Abs{\Apf(v)} = \Abs{\Opf(v)}$.
	We first define variables which count how many neighbors of each type $v$ has.	
	Let $i_{\ForkF/\MergeF/}$ be the number of inneighbors of $v$ with type \ForkF/\MergeF/.
	The variables $i_{\ForkF/\ForkF/}$, $i_{\MergeF/\MergeF/}$ and $i_{\MergeF/\ForkF/}$ are defined analogously, and $o_{\ForkF/\MergeF/}$, $o_{\ForkF/\ForkF/}$, $o_{\MergeF/\MergeF/}$ and $o_{\MergeF/\ForkF/}$ are defined analogously for the outneighbors of $v$.
	
	Let $\OpLabel(v) = \Merge$, then $o_{\ForkF/\ForkF/} \leq \Abs{\Apf(v)} \leq o_{\MergeF/\MergeF/} + o_{\ForkF/\ForkF/}$ and $o_{\ForkF/\ForkF/} \leq \Abs{\Opf(v)} \leq o_{\MergeF/\MergeF/} + o_{\ForkF/\ForkF/}$.
	If $o_{\MergeF/\MergeF/} = 0$, then $\Abs{\Apf(v)} = o_{\ForkF/\ForkF/} = \Abs{\Opf(v)}$.
	Otherwise, due to the initial assumption, $\Abs{\Apf(v)} = o_{\ForkF/\ForkF/} + o_{\MergeF/\MergeF/} - 1 = \Abs{\Opf(v)}$.
	Hence, $\Abs{\Apf(v)} = \Abs{\Opf(v)}$.
	The case where $\OpLabel(v) = \Fork$ follows analogously.
	
	For any path $v_1,v_2\dots v_k$ where all vertices have in- and outdegree one, the approximation assigns the same label to all vertices.
	Furthermore, it removes at most two arcs in such a path.
	The decision of whether to remove an arc or not depends only on the label of the predecessor of $v_1$ and of the successor of $v_k$, and not on the length of the path.
	Hence, we can treat this case by contracting the path into a single vertex $v$.
	Let $u \in \In{v}$ and $w \in \Out{v}$ be the unique neighbors of $v$.
	Note that, by definition, $\ApLabel(v) = \ApLabel(u)$.
	
	If both incident arcs of $v$ are in $B$, we flip the label of $v$ in the greedy relabeling phase.
	Either before or after the flip $v$ is correctly labeled.
	Since flipping the label of $v$ does not worsen the solution, it follows from the previous case that $\Abs{\Apf(v)} = \Abs{\Opf(v)}$.
	
	If only one arc of $v$ was removed, the path $uvw$ behaves as a single arc $(u,w)$, and the removed arc can be assigned to a vertex by considering the types of $u$ and $w$, taking the same decision as if we were assigning the arc $(u,w)$ to a vertex.
	Hence, $\Abs{\Apf(v)} = 0 = \Abs{\Opf(v)}$ in this case.	
	\end{proof}
	}	
	We are now ready to prove an approximation factor of two.

\begin{lemma}
\label{lemma:funnels:arc deletion approximation}
	$\Abs{B} \leq 2 \cdot \Abs{A'}$
\end{lemma}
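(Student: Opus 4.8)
The plan is to prove the per-vertex inequality $\Abs{\Apf(v)} \le 2\Abs{\Opf(v)}$ for every $v \in V$; since $\biguplus_{v \in V}\Apf(v) = B$ and $\biguplus_{v \in V}\Opf(v) = A'$ are disjoint unions, summing then yields $\Abs{B} = \sum_{v \in V}\Abs{\Apf(v)} \le 2\sum_{v \in V}\Abs{\Opf(v)} = 2\Abs{A'}$. By \cref{lemma:funnels:arc deletion approx:L = label} the inequality already holds with equality for every correctly labeled vertex and for every vertex with $\In{v} = \Out{v} = 1$, so only the two ``incorrectly labeled'' types $T(v) = \ForkF/\MergeF/$ and $T(v) = \MergeF/\ForkF/$ remain.

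The decisive tool is the greedy labeling rule defining $\ApLabel$: whenever $\ApLabel(v) = \Fork$ it forces $\Out[D]{v} \ge \In[D]{v}$, and whenever $\ApLabel(v) = \Merge$ it forces $\In[D]{v} \ge \Out[D]{v}$. Consider $T(v) = \ForkF/\MergeF/$, so $\Out{v} \ge \In{v}$. By construction $\Apf(v)$ consists of the incoming arcs of $v$ deleted by \ArcDelApprox\ together with the deleted outgoing arcs to neighbors of type \ForkF/\ForkF/. The first group has size at most $\Out{v}-1$: by the labeling rule $\Out{v}\ge\In{v}$, and if $v$ retains no incoming arc then the rule forces $\Out{v}>\In{v}$, since otherwise the relabeling phase would have flipped $v$ to \Merge; the second group has size at most $o_{\ForkF/\ForkF/}$. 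I would bound each group by $\Abs{\Opf(v)}$: because $\OpLabel(v) = \Merge$, the optimum deletes at least $\Out{v}-1$ of the outgoing arcs of $v$, all lying in $\Opf(v)$; and by \cref{obs:funnel:labels} the optimum deletes every outgoing arc from $v$ to a neighbor it labels \Fork, in particular all $o_{\ForkF/\ForkF/}$ arcs to neighbors of type \ForkF/\ForkF/, which also lie in $\Opf(v)$. Hence $\Abs{\Apf(v)} \le (\Out{v}-1) + o_{\ForkF/\ForkF/} \le 2\Abs{\Opf(v)}$. The case $T(v) = \MergeF/\ForkF/$ is symmetric, interchanging the roles of in- and outdegree and of \Fork\ and \Merge.

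The main obstacle is the interaction between the two incorrect types, which is precisely why the definitions of $\Apf$ and $\Opf$ route some incident arcs of a vertex onto its neighbors; I must check in every subcase that this routing stays a genuine partition (no arc booked on both endpoints) while still supporting the counts above. Concretely, for $T(v) = \MergeF/\ForkF/$ the arcs booked in $\Opf(v)$ exclude the incoming arcs from inneighbors of type \ForkF/\MergeF/, so the naive bound gives $\Abs{\Apf(v)} \le 2\Abs{\Opf(v)}$ only when $v$ has at least as many inneighbors of types \ForkF/\ForkF/ and \MergeF/\ForkF/ as of type \ForkF/\MergeF/. To repair the remaining case I would use the exchange step: every arc $(u,v)$ with $T(u) = \ForkF/\MergeF/$ and $T(v) = \MergeF/\ForkF/$ runs from a \Fork\ to a \Merge\ vertex under $\ApLabel$, hence is never in $B$, but from a \Merge\ to a \Fork\ vertex under $\OpLabel$, hence is forced into $A'$ by \cref{obs:funnel:labels}. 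Such arcs are pure profit for the approximation, so for each of them I reassign one arc out of $\Apf(v)$ into $\Apf(u)$. This lowers $\Abs{\Apf(v)}$ enough to restore the inequality at $v$, while the receiving vertex $u$ (of type \ForkF/\MergeF/) absorbs the extra arcs: each reassigned arc corresponds to an outgoing arc $(u,v) \in A'$ that already contributes to $\Opf(u)$, so $u$ keeps the slack $\Abs{\Apf(u)} \le 2\Abs{\Opf(u)}$. The delicate points to verify are that each reassignment draws a distinct arc from $\Apf(v)$ so that $\Apf$ remains a partition of $B$, and that enough such arcs exist to donate; once this is confirmed for all types, summing the per-vertex bounds gives $\Abs{B} \le 2\Abs{A'}$.
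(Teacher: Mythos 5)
Your proposal is correct and follows essentially the same route as the paper's proof: the same per-vertex charging via the partition functions $\Apf$ and $\Opf$, the same reduction to the two incorrectly labeled types using \cref{lemma:funnels:arc deletion approx:L = label}, the same use of the greedy labeling rule to bound the deleted in-arcs by $\Out{v}-1$ (resp.\ out-arcs by $\In{v}-1$), and the same exchange idea for arcs between \ForkF/\MergeF/ and \MergeF/\ForkF/ vertices, which the paper implements as correction terms $o_{\MergeF/\ForkF/}$ and $i_{\ForkF/\MergeF/}$ that cancel when summed over all vertices. The only cosmetic difference is that you phrase the cancellation as an explicit reassignment of arcs between $\Apf(v)$ and $\Apf(u)$ rather than as additive correction terms.
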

\begin{proof}
	We first define variables which count the number of neighbors of~$v$ for each type.
	Let $i_{\ForkF/\MergeF/}$ be the number of inneighbors of $v$ with type~\ForkF/\MergeF/.
	The variables $i_{\ForkF/\ForkF/}$, $i_{\MergeF/\MergeF/}$ and $i_{\MergeF/\ForkF/}$ are defined analogously, and $o_{\ForkF/\MergeF/}$, $o_{\ForkF/\ForkF/}$, $o_{\MergeF/\MergeF/}$ and $o_{\MergeF/\ForkF/}$ are defined analogously for the outneighbors of $v$.
	Next, we show for every incorrectly labeled vertex $v$ (with in- or outdegree greater than one) that $\Abs{\Apf(v)} + o_{\MergeF/\ForkF/} \leq 2\Abs{\Opf(v)}$ (if $T(v) = \ForkF/\MergeF/$) and $\Abs{\Apf(v)} - i_{\ForkF/\MergeF/} \leq 2\Abs{\Opf(v)}$ (if $T(v) = \MergeF/\ForkF/$).
	Note that the sum of all $o_{\MergeF/\ForkF/}$ equals the sum of all $i_{\ForkF/\MergeF/}$.
	Hence, we also show that $\Abs{\biguplus_{v \in X}\Apf(v)} \leq 2 \Abs{\biguplus_{v \in X}\Opf(v)}$, where $X$ is the set of all incorrectly labeled vertices.
	\begin{case}{$T(v) = \ForkF/\MergeF/$.}
		By definition, $\Abs{\Apf(v)} \leq c + o_{\ForkF/\ForkF/}$, where $c \leq \In{v}$ is the number of incoming arcs removed from $v$ by $B$.
		Since $\OpLabel(v) = \Merge$, any arc $(v,u)$ with $\OpLabel(u) = \Fork$ must be in $A'$.
		Furthermore, we need to remove at least $\Out{v} - 1$ many arcs from $v$ in order to satisfy its label.
		Hence, $\Abs{\Opf(v)} \geq d + o_{\MergeF/\ForkF/} + o_{\ForkF/\ForkF/} \geq \Out{v} - 1$ for some $0 \leq d \leq \Out{v}$.
		Thus,
		\(
			\Abs{\Apf(v)} + o_{\MergeF/\ForkF/} \leq 2\Abs{\Opf(v)}
			\Leftarrow c + o_{\ForkF/\ForkF/} + o_{\MergeF/\ForkF/} \leq 2(d + o_{\ForkF/\ForkF/} + o_{\MergeF/\ForkF/})
			\Leftarrow c \leq d + \Out{v} - 1.
			\)

		If $\In{v} = \Out{v}$, we know (from the definition of $\ApLabel$) that some inneighbor of $v$ is labeled \Fork\ by $\ApLabel$.
		In this case, $c \leq \In{v} - 1 = \Out{v} - 1$.
		If $\In{v} < \Out{v}$, then $c \leq \In{v} \leq \Out{v} - 1$.
		In both cases, $c \leq d + \Out{v} - 1$ and so $\Abs{\Apf(v)} + o_{\MergeF/\ForkF/} \leq 2\Abs{\Opf(v)}$.
		
	\end{case}
	\begin{case}{$T(v) = \MergeF/\ForkF/$.}
		By definition, $\Abs{\Apf(v)} \leq c + i_{\MergeF/\MergeF/}$, where $c \leq \Out{v}$ is the number of outneighbors of $v$ of type different from~\ForkF/\MergeF/ contained in $B$.
		Since $\OpLabel(v) = \Fork$, all incoming arcs $(u,v)$ with $T(u) = \MergeF/\MergeF/$ must be contained in $A'$.
		If $T(u) = \ForkF/\MergeF/$, then the arc $(v,u)$ is assigned by $\Opf$ to $u$ and not to $v$ (if it is in $A'$).
		Furthermore, we need to remove at least $\In{v} - 1$ arcs, whereas arcs from inneighbors with type \ForkF/\MergeF/ are not counted in $\Opf(v)$.
		Hence, $\Abs{\Opf(v)} \geq d + i_{\MergeF/\MergeF/} \geq \In{v} - 1 -i_{\ForkF/\MergeF/}$, where $0 \leq d \leq \In{v}$.
		It suffices to show that
		\(
			\Abs{\Apf(v)} - i_{\ForkF/\MergeF/} \leq 2\Abs{\Opf(v)}
			\Leftarrow c + i_{\MergeF/\MergeF/} - i_{\ForkF/\MergeF/} \leq 2d + 2i_{\MergeF/\MergeF/}
			\Leftarrow c \leq d + \In{v} - 1.\)
		
		If $\In{v} = \Out{v} \geq 2$, then $i_{\ForkF/\MergeF/} = 0$ since $\ApLabel(v) = \Merge$.
		This implies that $\Abs{\Opf(v)} = d + i_{\MergeF/\MergeF/} \geq \In{v} - 1 = \Out{v} - 1$.
		If $i_{\MergeF/\MergeF/} = 0$, then $\Abs{\Apf(v)} \leq c \leq \Out{v} \leq 2(\Out{v} - 1) = 2(\In{v} - 1) \leq 2\Abs{\Opf(v)}$.
		If $i_{\MergeF/\MergeF/} > 0$, then we argue that at least one incoming arc of $v$ was not removed, since otherwise the label of $v$ would be changed in the greedy relabeling phase of the approximation.
		Hence, $\Abs{\Apf(v)} \leq 2 \Abs{\Opf(v)} \Leftarrow c + i_{\MergeF/\MergeF/} - 1 \leq 2d + 2i_{\MergeF/\MergeF/} \Leftarrow c \leq d + \In{v} = d + \Out{v}$.
		If $\In{v} > \Out{v}$, then $c \leq \Out{v} \leq \In{v} - 1$.		
		In both cases, we have $\Abs{\Apf(v)} - i_{\ForkF/\MergeF/} \leq 2\Abs{\Opf(v)}$.
	\end{case}
	
	Thus, $\Abs{\Apf(v)} \leq 2\Abs{\Opf(v)}$ for any incorrectly labeled vertex~$v$.
	The same holds for correctly labeled vertices by \cref{lemma:funnels:arc deletion approx:L = label}.
	By definition we know that $\biguplus_{v \in V}\Apf(v) = B$ and $\biguplus_{v \in V} \Opf(v) = A'$.
	Hence, $\Abs{\biguplus_{v \in V}\Apf(v)} = \Abs{B} \leq 2\Abs{A'} = 2\Abs{\biguplus_{v \in V} \Opf(v)}$.
\end{proof}

\Cref{alg:funnel:arc deletion:approximation} clearly runs in linear time, as  computing the topological ordering of a DAG can be done in linear time.
The third phase of the algorithm, where labels are changed, can also be executed in linear time by following any ordering of the vertices.
We only change the label of a vertex if this leads to a better solution.
To check if we have a better solution we only need to consider all incident arcs of a vertex and the labels of their endpoints.
Since \cref{lemma:funnels:arc deletion approx:L = label,lemma:funnels:arc deletion approximation} consider all cases for all vertices $v \in V$, we conclude that \pADDF\ can be approximated in linear time within a factor of two.
\end{proof}
}%
\subsubsection*{A Fixed-Parameter Algorithm.}%
\label{subsec:funnel:arc deletion:label}%
 Using the forbidden subgraph characterization
(\cref{thm:funnel:characterization}(\ref{chr:forbidden subgraph})), we can compute a digraph's
arc-deletion distance~$d$ to a funnel in
$\Bo(5^d\cdot(\Abs{V}^2 + \Abs{V}\cdot\Abs{A} ))$~time: After
contracting the arcs on each vertex with in- and outdegree one into a single
arc, it is enough to destroy all subgraphs~$D_0$ or $D_1$ as in
\cref{thm:funnel:characterization}(\ref{chr:forbidden subgraph}). The optimal arc-deletion set to
destroy all these subgraphs can be found by branching into the at most
five possibilities for each subgraph~$D_0$ or~$D_1$.

In this section, we show that, if the input is a DAG, we can solve \pADDF\ in $\Bo(3^d \cdot(\Abs{V} + \Abs{A}))$ time instead; thus, in particular, we have linear running time if $d \in \Bo(1)$.
Moreover, the resulting algorithm has also better running time in practice.
As in the approximation algorithm, we again label the vertices.
\Cref{prop:funnels:arc deletion approx=opt} shows that, after the vertices are correctly labeled with either \Merge or \Fork, solving \pADDF\ can be done in linear time on DAGs.
Hence, the complicated part of the problem lies in finding such a labeling. 

\looseness=-1 In the following, we describe a search-tree algorithm that receives a DAG $D = (V, A)$ and an upper bound~$d \in \Nat$ on the size of the solution as input, and it maintains a partial labeling $L \colon V \to \{\Fork, \Merge\}$ of the vertices and a partial arc-deletion set~$A'$ that will constitute the solution in the end.
Initially, $A' = \emptyset$ and $L(v)$ is undefined for each $v \in V$, denoted by $L(v) = \bot$.
The algorithm exhaustively and alternately applies the data reduction and branching rules described below and aborts if $|A'| > d$.
The rules either determine a label of a vertex (based on preexisting labels and on the degree of the vertex) or put some arcs into the solution~$A'$.
Herein, when we say that an arc is put into the solution, we mean that it is deleted from~$D$ and put into~$A'$.
To show that the algorithm finds a size-$d$ arc deletion set to a funnel if there is one, we ensure that the rules are \emph{correct}, meaning that, if there is a solution of size~$d$ that respects the labeling~$L$ and contains~$A'$ before applying a data reduction rule or branching rule, then there is also such a solution in at least one of the resulting instances.

\Cref{rrule:funnels:arc deletion set label} labels vertices of indegree (outdegree) at most one in a greedy fashion, based on the label of the single predecessor (successor) if it exists.%
\begin{rrule}[\SetLabelname\ifappendix) ({\appref[\appsymb]{proof:rrule:funnels:arc deletion set label}}\fi]
	\label{rrule:funnels:arc deletion set label}
	Let~$v \in V$ be an unlabeled vertex.
	
	Set~$L(v) \Set \Fork$ if at least one of the following is true:
	\begin{inparaenum}[I)]
		\item $\In{v} = 0$;
		\item $\In{v} = 1$ and $\exists u \in \InN{v}: L(u) = \Fork$; 
		\item $\Out{v} > 1$,~$\In{v} = 1$ and~$\forall u \in \OutN{v} : L(u) \neq \bot$.
	\end{inparaenum}

	Set~$L(v) \Set \Merge$ if at least one of the following is true:
	\begin{inparaenum}[I)]
		\item	$\Out{v} = 0$;
		\item	$\Out{v} = 1$ and $\exists u \in \OutN{v}: L(u) = \Merge$;
		\item	$\Out{v} = 1$,~$\In{v} > 1$ and~$\forall u \in \InN{v} : L(u) \neq \bot$.
	\end{inparaenum}
\end{rrule}
\appendixproof{rrule:funnels:arc deletion set label}{
\begin{proof}[Correctness of \Cref{rrule:funnels:arc deletion set label}]
	Clearly, in a funnel the function~$\Label$ attributes every source a \Fork label and every sink a \Merge label.
	Since destroying sinks and sources is not possible,~\Cref{rrule:funnels:arc deletion set label} labels these vertices optimally.

	Let~$v$ be a vertex with~$\In{v} = 1$, let~$u \in \InN{v}$ be its only predecessor and assume~$L(u) = \Fork$.
	If we set~$L(v) \Set \Fork$, then \ArcDelApprox will not remove any arc when considering~$v$.
	If some outgoing arc~$(v,w)$ is removed, then necessarily~$L(w) = \Fork$.
	Hence, if we instead set~$L(v) \Set \Merge$ we also need to remove this arc, and potentially more.
	This implies that it is never worse to set~$L(v) \Set \Fork$ in this case.
	An analogous argument holds for the case where~$\Out{v} = 1$ and~$L(u) = \Merge$ for the only successor~$u$ of~$v$.

	Finally, let~$v$ be a vertex where~$\Out{v} = 1$,~$\In{v} > 1$, and~$\forall u \in \InN{v} : L(u) \neq \bot$.
	Since, by assumption, all outneighbors of~$v$ already have their labels set and satisfied, we only need to consider the label of~$v$ and of its only predecessor~$u$.
	If~$L(u) = \Fork$ in an optimal solution, then we know by the previous case that it is optimal to set~$L(v) \Set \Fork$.
	If~$L(u) = \Merge$ in an optimal solution, then we need to remove the arc~$(u,v)$ or some outgoing arc of~$v$.
	That is, we need to remove at least one arc of~$v$.
	By setting~$L(v) \Set \Fork$, we know that we need to remove exactly one arc of~$v$.
	Hence, doing so is optimal.
	An analogous argument also holds for the last case where we set~$L(v) \Set \Merge$.
\end{proof}}

Having labeled some vertices---whose labels will be as in an \optlab\ in some branch of the search tree---we simulate in \SatLabel the behavior of \ArcDelApprox and remove arcs from labeled vertices.
\begin{rrule}[\SatLabelname\ifappendix) ({\appref[\appsymb]{proof:rrule:funnels:arc deletion satisfy label}}\fi]
\label{rrule:funnels:arc deletion satisfy label}
Let~$v$ be some vertex where~$L(v) = \Fork$ and~$\In{v} > 1$.
If $\exists u \in \InN{v} : L(u) = \Fork$, then put the arcs $\{(x,v) \mid x \in \InN{v} \land x \neq u\}$ into the solution.
Otherwise, put $\{(x,v) \mid x \in \InN{v} \land L(x) = \Merge\}$ into the solution.

Let~$v$ be some vertex where~$L(v) = \Merge$ and~$\Out{v} > 1$.
If~$\exists u \in \OutN{v} : L(u) = \Merge$, then put the arcs~$\{(v,x) \mid x \in \OutN{v} \land x \neq u\}$ into the solution.
Otherwise, put~$\{(v,x) \mid x \in \OutN{v} \land L(x) = \Fork\}$ into the solution.
\end{rrule}
\appendixproof{rrule:funnels:arc deletion satisfy label}{
\begin{proof}[Correctness of \Cref{rrule:funnels:arc deletion set label}]
The arcs removed by \SatLabel would also be removed by \ArcDelApprox\ if all vertices had a label.
Hence, if the labels are correct, by \cref{prop:funnels:arc deletion approx=opt},
\SatLabel only removes arcs that are present in some optimal arc-deletion set.
\end{proof}}

To assign a label to each remaining vertex, we branch into assigning one of the two possible labels. Key to an efficient running time is the observation that there is always a vertex which, regardless of the label set, has some incident arc which then has to be in the solution. This observation is exploited in \cref{brule:funnels:arc deletion set label}.
\begin{brule}[Label Branch]
\label{brule:funnels:arc deletion set label}
If there is some vertex~$v$ such that~$\forall w \in \InN{v} : L(w) \neq \bot$ or~$\exists w \in \InN{v} : L(w) = \Fork$, then branch into two possibilities: Set~$L(v) \Set \Fork$; Set~$L(v) \Set \Merge$.

If there is some vertex~$v$ such that~$\forall w \in \OutN{v} : L(w) \neq \bot$ or~$\exists w \in \OutN{v} : L(w) = \Merge$, then branch into two possibilities: Set~$L(v) \Set \Fork$; Set~$L(v) \Set \Merge$.	
\end{brule}
The final \cref{brule:funnels:arc deletion satisfy label} tries all possibilities of satisfying a label of a vertex.
\begin{brule}[Arc Branch]
\label{brule:funnels:arc deletion satisfy label}
If there is a vertex~$v$ with~$L(v) = \Fork$ and $\In{v} > 1$, then branch into all possibilities of removing all but one incoming arc of~$v$.
If there is a vertex~$v$ with~$L(v) = \Merge$ and~$\Out{v} > 1$, then branch into all possibilities of removing all but one outgoing arc of~$v$.
\end{brule}
The correctness of \nameref{brule:funnels:arc deletion satisfy label} follows from \cref{prop:funnels:arc deletion approx=opt}.
To show the algorithm's correctness, it remains to show the following central lemma.%
\begin{lemma}
	\label{lemma:funnels:arc deletion alg label}
	Let~$D$ be a DAG.
	If \nameref{brule:funnels:arc deletion set label}, \nameref{brule:funnels:arc deletion satisfy label}, \SetLabel and \SatLabel are not applicable, then~$D$ is a funnel and all vertices have a label.
\end{lemma}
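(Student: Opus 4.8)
The plan is to establish the two assertions of the lemma separately: first that every vertex carries a label, and then that the current digraph~$D$ satisfies the degree characterization of \cref{thm:funnel:characterization}(\ref{chr:degree}) and is therefore a funnel. The guiding principle throughout is contrapositive: I show that whenever a vertex is still unlabeled, or the degree characterization is violated, one of the four rules must still be applicable, contradicting the hypothesis. Note that, since \SetLabel\ applies only to unlabeled vertices, the only rules that can remain relevant once everything is labeled are \SatLabel, \nameref{brule:funnels:arc deletion set label} and \nameref{brule:funnels:arc deletion satisfy label}, so the structural analysis will hinge on these.

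For the labeling claim I would fix a topological ordering $v_1,\dots,v_n$ of~$D$ and take the unlabeled vertex~$v_i$ of smallest index. All in-neighbors of~$v_i$ precede it in the ordering and are hence already labeled, so the precondition $\forall w \in \InN{v_i}: L(w)\neq\bot$ of \nameref{brule:funnels:arc deletion set label} is satisfied and the rule applies to~$v_i$, a contradiction. Thus no unlabeled vertex remains.

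For the funnel claim I would prove three facts about the final labeling: (a)~every \Fork-vertex~$v$ has $\In{v}\le 1$; (b)~every \Merge-vertex~$v$ has $\Out{v}\le 1$; and (c)~there is no arc $(v,u)$ with $L(v)=\Merge$ and $L(u)=\Fork$. Facts (a) and (b) are immediate, since a \Fork-vertex with $\In{v}>1$ (respectively a \Merge-vertex with $\Out{v}>1$) is exactly the trigger of \nameref{brule:funnels:arc deletion satisfy label}, which by assumption is inapplicable. Granting (a)--(c), suppose for contradiction that $D$ violated \cref{thm:funnel:characterization}(\ref{chr:degree}), so that some vertex~$p$ with $\In{p}>1$ reaches some vertex~$q$ with $\Out{q}>1$ via a directed path $p=x_0,x_1,\dots,x_\ell=q$. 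By (a) the condition $\In{p}>1$ forces $L(p)=\Merge$, and by (b) the condition $\Out{q}>1$ forces $L(q)=\Fork$; in particular $p\neq q$, so $\ell\ge 1$. Reading the labels along the path, which start at \Merge\ on~$x_0$ and end at \Fork\ on~$x_\ell$, there must be an index~$j$ with $L(x_j)=\Merge$ and $L(x_{j+1})=\Fork$, and the arc $(x_j,x_{j+1})$ then contradicts~(c). Hence $D$ is a funnel.

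The main obstacle is fact~(c). The delicate point is that, once (a) and (b) hold, any surviving \Merge-to-\Fork arc $(v,u)$ necessarily has $\Out{v}=1$ and $\In{u}=1$, so it is touched neither by \nameref{brule:funnels:arc deletion satisfy label} nor by the degree-violation part of \SatLabel\ directly. I would therefore argue that such an arc is eliminated by the ``otherwise'' branch of \SatLabel\ at the moment its \Fork-endpoint is handled, using \cref{obs:funnel:labels} (no \Merge-to-\Fork arc can survive in any funnel labeling) together with the correctness of \SatLabel\ that follows from \cref{prop:funnels:arc deletion approx=opt}. Concretely, the cleanest route is to show that exhaustive application of \SetLabel\ and \SatLabel\ maintains the invariant that no arc runs from a labeled \Merge-vertex to a labeled \Fork-vertex; checking that each rule preserves this invariant is the step demanding the most care, and it is what ultimately yields~(c) at the fixpoint.
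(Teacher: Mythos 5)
Your reduction of the funnel claim to facts (a)--(c), and your topological-ordering argument that every vertex ends up labeled, are both sound; the latter is in fact a cleaner route to the ``all labeled'' part than the paper's. The genuine gap is fact~(c), which you correctly identify as the crux but do not prove. Your first suggested mechanism is self-defeating: you observe that a surviving arc $(v,u)$ from a \Merge vertex to a \Fork vertex must have $\Out{v}=1$ and $\In{u}=1$, yet the ``otherwise'' branch of \SatLabel\ fires at a \Fork vertex only when its indegree exceeds one, so it never touches precisely the arcs you need it to remove. Your fallback---an invariant that no arc ever runs from a labeled \Merge vertex to a labeled \Fork vertex---is asserted rather than checked, and checking it is where the whole difficulty lives: condition~III of \SetLabel\ can label a vertex $w$ with $\In{w}>1$ and $\Out{w}=1$ as \Merge once its in-neighbors are labeled, and independently label its unique successor $z$ with $\In{z}=1$ and $\Out{z}>1$ as \Fork once $z$'s out-neighbors are labeled, creating exactly such an arc; no stated rule then has a degree precondition that deletes it. So the invariant cannot be verified by the purely local, rule-by-rule check you propose, and your proof stops exactly where it becomes nontrivial.

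The paper sidesteps (c) entirely. It takes a vertex-minimal forbidden subgraph, consisting of a vertex $v$ of indegree at least two joined by a path of in-/outdegree-one vertices to a vertex $u$ of outdegree at least two, and argues that \nameref{brule:funnels:arc deletion set label} must still be applicable to some vertex on or before that path. Crucially, that argument is dynamic: it uses \emph{which rule} assigned each label (e.g., that $L(v)=\Merge$ can only have arisen via \SetLabel, and that an unlabeled vertex must occur along the $v$--$u$ path), not merely the labels present at the fixpoint. To salvage your approach you would need an argument of that kind---tracking the provenance of labels along the path between a high-indegree and a high-outdegree vertex---to establish your invariant; as written, the proposal leaves the essential step open.
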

\begin{proof}
First, note that if the label of a vertex has been set, it will be satisfied by either applying \SatLabel or by branching with \nameref{brule:funnels:arc deletion satisfy label}.
Since satisfying all labels turns~$D$ into a funnel (\cref{thm:funnel:characterization}(\ref{chr:degree})), it is enough to show that all vertices have a label if \nameref{brule:funnels:arc deletion set label}, \SetLabel, and \SatLabel are not applicable. 

We first show that if there is some forbidden subgraph~$D' = (V',A') \subseteq D$, that is, $D'$ is isomorphic to some $D_i$ from \cref{thm:funnel:characterization}(\ref{chr:forbidden subgraph}), and if additionally \SetLabel and \SatLabel are not applicable, then \nameref{brule:funnels:arc deletion set label} is applicable.
Let~$D'$ be the forbidden subgraph in~$D$ with the smallest number of vertices.
Let~$v,u \in V'$ be two (not necessarily distinct) vertices in~$D'$ such that~$\In[D']{v} > 1$, $\Out[D']{u} > 1$.
Observe that all vertices between~$v$ and~$u$ 
in $D'$ (if any) have in- and outdegree one in~$D$, because $D'$ has the smallest number of vertices. 
We distinguish two cases.

\emph{Case 1:} $\forall w \in \InN[D]{v} : L(w) \neq \bot$.
Then either $\Out[D]{v} > 1$, meaning that we can apply \nameref{brule:funnels:arc deletion set label} (as required), or~$L(v) = \Merge$ due to \SetLabel.
Since all vertices between~$v$ and~$u$ have in- and outdegree one, we also know from the latter case that there is some arc~$(x,y)$ in the (uniquely defined) \Patht{v}{u} such that~$L(x) = \Merge$ and~$L(y) = \bot$.
Note that it cannot happen that~$L(y) = \Fork$ since \SatLabel is not applicable.
We also know that~$\Out[D]{y} > 1$ since \SetLabel is not applicable.
This implies \nameref{brule:funnels:arc deletion set label} is applicable on~$y$.

\emph{Case 2:} $\exists w \in \InN[D]{v} : L(w) = \bot$. This case is illustrated in \cref{fig:funnels:arc deletion labels lemma ex1}.
\begin{figure}[t]
\centering
\begin{tikzpicture}
\input{Pictures/arc-deletion-labels-lemma-ex1.tex}
\end{tikzpicture}
\caption{A DAG where \SatLabel and \SetLabel are not applicable.
The letter~F stands for a \Fork label and M stands for \Merge.
\nameref{brule:funnels:arc deletion set label} cannot be applied to~$v$ since~$u$ does not have a label, yet it can be applied to~$x \in \InC{w}$.}
\label{fig:funnels:arc deletion labels lemma ex1}
\end{figure}%
We show that we can find some vertex in~$\InC{w}$ to which we can apply \nameref{brule:funnels:arc deletion set label}.
Consider the longest \Patht{x}{w} that only contains vertices in~$\InC{w}$ which do not have a label.
Clearly, $\forall y \in \InN{x} : L(y) \neq \bot$ and~$\In{x} > 0$ since all sources have a label.
Thus, we can apply \nameref{brule:funnels:arc deletion set label} on~$x$.

Since only these two cases are possible, and in both we can apply \nameref{brule:funnels:arc deletion set label}, it follows, by contraposition, that~$D$ is a funnel and all vertices have a label if \nameref{brule:funnels:arc deletion set label}, \SetLabel, and \SatLabel are not applicable.
\end{proof}

By combining the previous data reduction and branching rules, we obtain a search-tree algorithm for \pADDF\ on DAGs:

\begin{theorem}
	\label{thm:funnels:arc deletion algorithm}
	\pADDF\ can be solved in time $\Bo(3^d \cdot (\Abs{V} + \Abs{A}))$, where $d$ is the arc-deletion distance to a funnel of a given DAG $D=(V,A)$.
\end{theorem}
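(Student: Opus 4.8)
The plan is to treat this as a standard branch-and-reduce analysis and to prove correctness and the running-time bound separately, using the rules and lemmas developed above. The algorithm exhaustively and alternately applies \SetLabel, \SatLabel, \nameref{brule:funnels:arc deletion set label}, and \nameref{brule:funnels:arc deletion satisfy label}, aborting a branch once $\Abs{A'} > d$. For correctness I would argue that at least one root-to-leaf path keeps the pair $(L, A')$ consistent with a fixed optimal solution: the two reduction rules are correct by \cref{rrule:funnels:arc deletion set label,rrule:funnels:arc deletion satisfy label}, \nameref{brule:funnels:arc deletion satisfy label} is correct by \cref{prop:funnels:arc deletion approx=opt}, and \nameref{brule:funnels:arc deletion set label} is trivially exhaustive since it tries both labels of $v$. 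By \cref{lemma:funnels:arc deletion alg label}, at every leaf at which no rule applies the graph $D - A'$ is a funnel and all vertices are labeled, so the minimum $\Abs{A'}$ over all leaves equals the true arc-deletion distance whenever it is at most $d$; this simultaneously yields the deleted arc set.

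Next I would bound the per-node cost. One exhaustive round of \SetLabel and \SatLabel, together with a test of whether a branching rule applies, can be done in $\Bo(\Abs{V} + \Abs{A})$ time: labels propagate along a topological order exactly as when computing \ApLabel, and whether $D - A'$ is still not a funnel (and where to branch) is detected in linear time via the degree characterization in \cref{thm:funnel:characterization}(\ref{chr:degree}). Hence the total running time is the number of search-tree nodes times $\Bo(\Abs{V} + \Abs{A})$, and it remains only to bound the number of nodes by $\Bo(3^d)$.

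For the node count I would use the measure $\mu = d - \Abs{A'}$ and bound every branching number by $3$. \nameref{brule:funnels:arc deletion satisfy label} on a vertex of relevant degree $k \ge 2$ creates $k$ children, each deleting $k-1 \ge 1$ arcs, which has branching number $k^{1/(k-1)} \le 2$. The delicate case is the binary \nameref{brule:funnels:arc deletion set label}, which need not decrease $\mu$ in both children: setting $L(v) = \Fork$ on a vertex with $\In{v} > 1$ forces \SatLabel to delete an incident in-arc, but $L(v) = \Merge$ may delete nothing immediately. I would resolve this by an amortized argument tied to a minimal forbidden subgraph $D_k$ from \cref{thm:funnel:characterization}(\ref{chr:forbidden subgraph}): after \SetLabel labels its in-/out-degree-one interior deterministically, it reduces to a vertex $v_0$ with $\In{v_0} \ge 2$, a vertex $v_1$ with $\Out{v_1} \ge 2$, and the arc $(v_0, v_1)$. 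Branching on $v_0$, and then on $v_1$ inside the child where $L(v_0) = \Merge$, produces three outcomes, each deleting at least one arc: delete $(v_0,v_1)$ (forced when $L(v_0)=\Merge$ and $L(v_1)=\Fork$, since \cref{obs:funnel:labels} forbids a \Merge-to-\Fork arc); make $v_0$ a \Fork vertex and shed $\ge 1$ of its $\ge 2$ in-arcs; or make $v_1$ a \Merge vertex and shed $\ge 1$ of its $\ge 2$ out-arcs. Contracting the single intermediate node that sets $L(v_0) = \Merge$ without deleting an arc, this is an effective three-way branch with vector $(1,1,1)$ and branching number $3$, which dominates the \nameref{brule:funnels:arc deletion satisfy label} branches. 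Thus the tree has $\Bo(3^d)$ nodes and the algorithm runs in $\Bo(3^d \cdot (\Abs{V} + \Abs{A}))$ time.

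The main obstacle is precisely this amortization. Taken in isolation, \nameref{brule:funnels:arc deletion set label} is a binary branch that can leave $\mu$ unchanged, so a naive measure does not even guarantee a bounded tree; the real work is to show that every such zero-progress label assignment is immediately followed by a forced arc deletion in each of its continuations, so that the \emph{effective} branching vector is $(1,1,1)$ rather than something unbounded, and that the interior of a minimal forbidden subgraph really has been contracted away by \SetLabel so that $v_0$ and $v_1$ are adjacent. Everything else—correctness of the individual rules and the linear per-node cost—is routine given \cref{prop:funnels:arc deletion approx=opt}, \cref{lemma:funnels:arc deletion alg label}, and \cref{thm:funnel:characterization}(\ref{chr:degree}).
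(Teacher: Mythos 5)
Your overall architecture matches the paper's: same algorithm, correctness from the correctness of the individual rules together with \cref{prop:funnels:arc deletion approx=opt} and \cref{lemma:funnels:arc deletion alg label}, linear work per search-tree node, and an ``effective three-way branch in which every child pays at least one arc'' argument for the $\Bo(3^d)$ bound. Your branching-vector treatment of \nameref{brule:funnels:arc deletion satisfy label} (vector $(k-1,\dots,k-1)$, branching number $k^{1/(k-1)}\le 2$) is a clean equivalent of the paper's conversion of that rule into a binary tree.

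The gap is in how you realize the three-way branch. You branch on the two end-vertices $v_0$ (with $\In{v_0}\ge 2$) and $v_1$ (with $\Out{v_1}\ge 2$) of a minimal forbidden subgraph and charge the middle child ($L(v_0)=\Merge$, $L(v_1)=\Fork$) to a forced deletion of the arc $(v_0,v_1)$. This fails for two reasons. First, $v_0$ and $v_1$ need not be adjacent: a minimal forbidden \emph{subgraph} can be any $D_k$, and \SetLabel{} only labels the interior path vertices, it does not contract them; moreover its propagation goes the wrong way here (\Fork{} propagates forward from a \Fork-labeled in-neighbor and \Merge{} backward from a \Merge-labeled out-neighbor, so a \Merge{} label on $v_0$ determines nothing about the path towards $v_1$). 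Second, even when $(v_0,v_1)$ is an arc, no rule of the algorithm deletes a \Merge-to-\Fork{} arc per se: \SatLabel{} and \nameref{brule:funnels:arc deletion satisfy label} only fire at a \Fork{} vertex of indegree greater than one or a \Merge{} vertex of outdegree greater than one, and the endpoints of a forbidden subgraph are only guaranteed to satisfy $\In{v_0}\ge 2$ and $\Out{v_1}\ge 2$; \cref{obs:funnel:labels} is a property of funnel labelings, not a deletion rule you can invoke. You flag exactly these two points as ``the real work,'' but the resolution you sketch does not supply it. The paper instead arranges, via the case analysis in the proof of \cref{lemma:funnels:arc deletion alg label}, to apply \nameref{brule:funnels:arc deletion set label} to a \emph{single} vertex $v$ with both $\In{v}\ge 2$ and $\Out{v}\ge 2$, so that the \Fork{} child loses an in-arc through \SatLabel{} and the \Merge{} child immediately triggers \nameref{brule:funnels:arc deletion satisfy label} on the out-arcs; that is the ingredient your amortization is missing.
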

\appendixproof{thm:funnels:arc deletion algorithm}{
  \begin{proof}
    The algorithm is as follows. On input of a DAG $D$, budget $d \in \mathbb{N}$, partial labeling~$L$, and partial solution~$A'$ (initially, $L$ does not label any vertex and $A' = \emptyset$), apply \SetLabel\ and \SatLabel\ until they do not apply anymore. If $|A'| > d$, then abort. Otherwise, apply \nameref{brule:funnels:arc deletion set label}, if possible. In each of the two resulting instances, apply \SatLabel\ until it does not apply anymore, and then apply \nameref{brule:funnels:arc deletion satisfy label}, if possible. Make a recursive call for each of the resulting instances. If no branching rule applies and $|A'| \leq d$, return $A'$ as a solution.

    By the correctness of the individual rules, the algorithm finds a solution if there is one (and otherwise does not return anything): From \cref{lemma:funnels:arc deletion alg label} we know that the algorithm turns the input into a funnel.
It remains to prove the running time bound.  
    With some simple bookkeeping and auxiliary tables, we can apply
    \SetLabel and \SatLabel\ 
    to all vertices
    of~$D$ 
    in total running time of~$\Bo(\Abs{V} + \Abs{A})$. 
    In the same running time we can find out whether \nameref{brule:funnels:arc deletion set label} and \nameref{brule:funnels:arc deletion satisfy label} is applicable. Hence, we need at most $\Bo(\Abs{V} + \Abs{A})$ running time per recursive call.
    

    It remains to bound the size of the search tree, that is, the outtree~$\cal T$ whose vertices are the calls of the algorithm and whose edges represent recursive calling relation. To ease the analysis, we instead bound the modified tree~$\cal T'$ in which we replace the outneighbors of a vertex corresponding to the recursive calls resulting from \nameref{brule:funnels:arc deletion satisfy label} by a binary tree as follows. If \nameref{brule:funnels:arc deletion satisfy label} branches into all possibilities of putting into the solution a subset of size~$d' - 1$ from an arc set~$B$ of size~$d'$, we instead recursively choose two arcs and introduce two recursive calls in which one of the two arcs is put into the solution until $B$ has size~1. Clearly, the size of $\cal T$ is upper bounded by the size of~$\cal T'$.

    We claim that $\cal T'$ has maximum outdegree three. Consider the instances resulting from \nameref{brule:funnels:arc deletion set label}. Without loss of generality assume that the first portion of \nameref{brule:funnels:arc deletion set label} was applied. The proof for the second portion is analogous. If $L(v)$ was set to $\Fork$, then after \SatLabel\ has been applied exhaustively, \nameref{brule:funnels:arc deletion set label} is not applicable. Otherwise, if $L(v)$ was set to \Merge, the modified \nameref{brule:funnels:arc deletion set label} with two branches is applied. Hence, indeed, there are at most three recursive calls.

    To bound the size of $\cal T'$, consider a path~$P$ from the root to a leaf. Whenever \nameref{brule:funnels:arc deletion set label} is applied, in each of the following recursive calls, at least one arc is put into the solution: Without loss of generality, assume that the first portion of \nameref{brule:funnels:arc deletion set label} was applied. The proof for the second portion is analogous. If $L(v)$ was set to \Fork, then \SatLabel will put at least one arc into the solution (note that, since \SetLabel is not applicable, $v$ has indegree at least two). Otherwise, if $L(v)$ was set to \Merge, then, since \SetLabel is not applicable, \nameref{brule:funnels:arc deletion satisfy label} is applicable and will put at least one outarc of $v$ into the solution. Hence, $P$ has length at most $d$, since no further recursive calls are made if $|A'| > d$. Combining this with the fact that $\cal T'$ has outdegree at most three, it follows that $\cal T'$ has size $O(3^d)$. 

	Hence, the running time of the search-tree algorithm is~$\Bo(3^d \cdot (\Abs{V} + \Abs{A}))$
        .
      \end{proof}
      }

\looseness=-1 To improve the running time of the search-tree algorithm in practice, we compute a lower bound of the arc-deletion distance to a funnel of the input and we stop expanding a branch of the search tree when the lower bound exceeds the available budget. 
A simple method for computing a lower bound is to find arc-disjoint forbidden subgraphs.
Clearly, the sum of the arc-deletion distances to a funnel of the subgraphs found is not larger than the distance of the input DAG.
To find such subgraphs, we first look for vertices with both in- and outdegree greater than one, which are not allowed in funnels.
Then we search for paths~$v_1,v_2, \dots, v_k$ such that~$\In{v_1} > 1$ and~$\Out{v_k} > 1$.
With some bookkeeping we can find a maximal set of arc-disjoint forbidden subgraphs in linear time.




\section{Empirical Evaluation of the Developed Algorithms}
\label{sec:experiments}
In this section, we empirically evaluate the approximation algorithm and the fixed-parameter algorithm for  \pADDF\ described in \cref{sec:algs}. We used artificial data sets and data based on publicly available real-world graphs. 
Our experiments show that both our algorithms are efficient in practice.

We implemented the algorithms in Haskell 2010. 
All experiments were run on an Intel$^\circledR$ Xeon$^\circledR$ E5-1620 \SI{3.6}{\giga\hertz} processor with \SI{64}{\giga\byte} of RAM.
The operating system was GNU/Linux, with kernel version 4.4.0-67.
For compiling the code, we used GHC version 7.10.3.
The code is released as free software \cite{parfunn}.

\paragraph{Experiments on Synthetic Funnel-like DAGs.}
\label{ssec:experiment:distance:funnel-like}
We generated random funnel-like DAGs through the following steps.
\begin{inparaenum}[(1)]
\item Choose the number of vertices, arc density~$p \in [0,1]$, and some~$s \in \Nat$.
\item Fix a topological ordering of the vertices.
\item Uniformly at random assign a label \Fork or \Merge to each vertex. 
\item Create an out-forest with \Fork vertices, and an in-forest with \Merge vertices.
\item Add random arcs from \Fork to \Merge vertices until a density of~$p$ (relative to the maximum number of arcs allowed by the labeling) is achieved.
\item Add~$s$ random arcs	which respect the topological ordering.
\end{inparaenum}
Steps (1) through~(5) result in a funnel which we call \emph{planted} funnel below.

For a fixed labeling, the algorithm above generates funnels uniformly at random from the input parameters.
The labeling, however, is drawn uniformly at random from all~$2^{\Abs{V}}$ possible labelings, without considering how many different funnels exist with a given labeling.
Hence, funnels with fewer arcs have a larger chance of being generated than funnels with many arcs (when compared to the chances in a uniform distribution).
We consider this bias to be harmless for the experiments since, for the exact algorithm, the number of arcs is not decisive for the running time, and for the approximation algorithm the number of arcs should not have a big impact on the solution quality. 
%

For~$n \in \{250, 300, 500, 1000\}$,~$p \in \{0.15, 0.5, 0.85\}$ and~$s \in \{125, 150, 175\}$ we generated 30 funnels with~$n$ vertices and density~$p$, and then added~$s$ random arcs as described above.
This gives us a total of~$1080$ DAGs.

Our fixed-parameter algorithm was able to compute the arc-deletion distance to funnel of~$1059$ instances ($98\%$) within 10 minutes.
The approximation algorithm finished on average in less than \SI{72}{\milli\second}.
\begin{figure}[t]
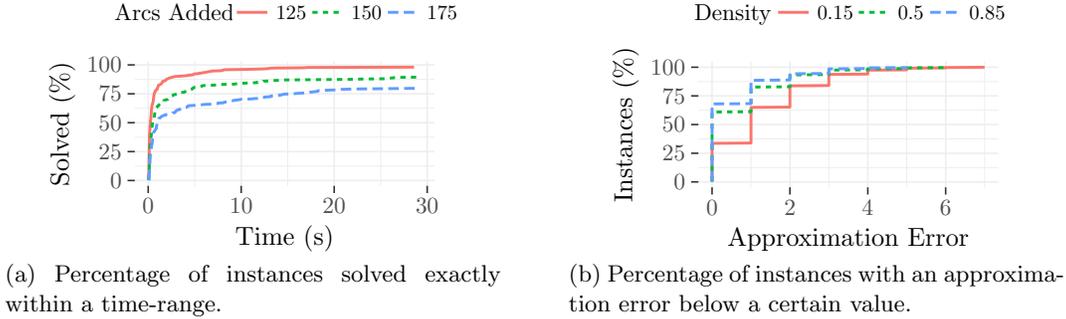

\centering
\SubfigureL
{\input{Plots/robustness-01.tex}
\label{fig:distance:funnel-like:time vs solved}}
{Percentage of instances solved exactly within a time-range.
}
{\Half}
\qquad
\SubfigureL
{\input{Plots/robustness-06.tex}
\label{fig:distance:funnel-like:approx error}}
{Percentage of instances with an approximation error below a certain value.} 
{\Half}
\caption{Running time and approximation error.}
\end{figure}
A cumulative curve with the percentage of instances solved within a certain time range is depicted in \cref{fig:distance:funnel-like:time vs solved}.
Most instances were solved fairly quickly: Within 15 seconds 932 ($86\%$) instances were solved optimally.
We can also observe that there were essentially two types of instances: Easy ones which were solved within few seconds, and harder ones which often were not solved within 10 minutes.
That is, if we limit the running time to five seconds, then we can solve 856 (79\%) instances, and if we increase it to sixty seconds, we can solve only 141 additional instances.

\cref{fig:distance:funnel-like:approx error} shows the relation between the error of the approximation algorithm with the density of the planted funnel.
The approximation algorithm found an optimal solution in 574 (54\%) instances, and in 260 (25\%) it removed only one more arc than necessary.
As the arc-deletion distance to a funnel of most instances was greater than 100, this means that the approximation ratio is very close to one.
Since the DAGs used here are already close to funnels, most decisions of the approximation algorithm are correct.
Intuitively, having correct local information helps the approximation make a globally optimal decision, and so it is unsurprising that the approximation factor in funnel-like DAGs is much better than the theoretical bound.
This is supported also by the fact that the approximation performed worse on sparse planted funnels than on dense ones
, since the proportion of ``wrong'' information regarding the arcs is larger on sparse funnels (when adding the same number of random arcs).

\paragraph{Experiments on DAGs Based on Real-World Data Sets.}
\label{ssec:experiment:distance:real world}
We obtained ten digraphs from the Konect database~\cite{KonectDataset}, containing food-chains, interactions between animals, and source-code dependencies.
We also downloaded the dependency network of all packages in Arch Linux.\footnote{Listed at \url{https://www.archlinux.org/packages/} and obtained using \texttt{pacman}.}
Since most of the gathered digraphs contain cycles, we performed a pre-processing step turning them into DAGs: we merged cycles into a single vertex, and then removed self-loops.
For each of the eleven DAGs we computed a lower bound and an approximation of its arc-deletion distance to funnel.
We also attempted to compute the real distance, stopping the algorithm if no solution was found within four hours.

The dataset was divided into six small DAGs ($\leq 156$ vertices and~$\leq 1197$ arcs) and five larger ones ($\geq 5730$ vertices and~$\geq 26218$ arcs). 
In the small ones, our fixed-parameter algorithm solved \pADDF\ within one second, and our approximation algorithm found the correct distance in $\leq$ \SI{2}{\milli\second}.
In two of the six small DAGs the distance was 60 and 129, which means that the exact algorithm is in practice much faster than what the worst-case upper bound predicts.

\looseness=-1 On the larger DAGs the fixed-parameter algorithm could not solve \pADDF\ within four hours.
By computing a lower bound for the distance, we managed to give an upper bound for the approximation factor, which was at most~$1.16$.
This means that the approximation algorithm is practical since it is fast ($\leq\SI{228}{\milli\second}$ on average) and yields a near-optimal solution.
Relative to the number of arcs, the arc-deletion distance to a funnel parameter was small (9\% on average).



\section{Conclusion}
\label{sec:conclusion}
\looseness=-1 We believe that our results add to the relatively small list of
fixed-parameter tractability results for directed graphs and introduce
a novel interesting structural parameter for directed (acyclic)
graphs. In particular, our approximation and fixed-parameter
algorithms could help to establish the arc-deletion distance to a funnel as a useful
``distance-to-triviality measure''~\cite{Cai03,GHN04,Nie10} for
designing fixed-parameter algorithms for NP-hard problems on DAGs.
We leave open whether computing the arc-deletion distance to funnel of a DAG is APX-hard. 
Finally, funnels might provide a basis for defining some useful 
digraph width or depth measures~\cite{GHKLOR14,GHK0ORS16,Mil17}.

\ifarxiv
	\bibliographystyle{abbrvnat} 
\else
	\bibliographystyle{splncs03}
\fi
\bibliography{literature}

\ifappendix%

\newpage

{\Large \textbf{Appendix}}

\appendix



	\appendixProofText
\fi

\end{document}